\newif\ifjourn
\ifjourn
    \documentclass[pdflatex,sn-mathphys-num]{sn-jnl}
    \theoremstyle{thmstyleone}
\else
    \documentclass{article}

    \usepackage[english]{babel}
    \usepackage[letterpaper,top=2cm,bottom=2cm,left=3cm,right=3cm,marginparwidth=1.75cm]{geometry}

    \usepackage[numbers]{natbib}
    \usepackage[colorlinks=true, allcolors=blue]{hyperref}
    \author{Matthias Bentert \and Fedor V. Fomin \and Petr A. Golovach}
\fi

\usepackage{mathdots} 
\usepackage{amsthm}
\usepackage{amsmath}
\usepackage{dsfont}
\usepackage{graphicx}
\usepackage{nicefrac}
\usepackage{tikz}
\usetikzlibrary{calc,math,patterns,shapes,backgrounds}
\usepackage{todonotes} 
\setuptodonotes{inline}
\usepackage{tabularx}
\usepackage[boxed,noline,noend,ruled,linesnumbered]{algorithm2e}
\usepackage{xintexpr}
\usepackage{boxedminipage}
\usepackage{framed}
\usepackage{tabularx}
\usepackage{tcolorbox}
\usepackage{caption}
\usepackage{xifthen}
\usepackage{thm-restate}
\usepackage[capitalize,noabbrev]{cleveref}
        
\newtheorem{theorem}{Theorem}

\newtheorem{corollary}{Corollary}
\newtheorem{open problem}{Open Problem}

\newcommand{\disp}{\textsc{Vertex-Disjoint Paths}}
\newcommand{\mdp}{\textsc{Maximum Vertex-Disjoint Paths}}
\newcommand{\dsp}{\textsc{Vertex-Disjoint Shortest Paths}}
\newcommand{\mdsp}{\textsc{Maximum Vertex-Disjoint Shortest Paths}}
\newcommand{\ldsp}{\textsc{Layered Vertex-Disjoint Shortest Paths}}

\newcommand{\npconp}{NP $\subseteq$ coNP/poly}

\DeclareMathOperator{\poly}{poly}
\DeclareMathOperator{\dist}{dist}
\DeclareMathOperator{\opt}{OPT}

\tikzset{vertex/.style ={circle,inner sep=0pt,minimum size=4pt}}
\tikzstyle{path} = [color=black!15,line cap=round, line join=round, line width=6pt]

\newlength{\RoundedBoxWidth}
\newsavebox{\GrayRoundedBox}
\newenvironment{GrayBox}[1]%
   {\setlength{\RoundedBoxWidth}{.93\textwidth}
    \def\boxheading{#1}
    \begin{lrbox}{\GrayRoundedBox}
       \begin{minipage}{\RoundedBoxWidth}}%
   {   \end{minipage}
    \end{lrbox}
    \begin{center}
    \begin{tikzpicture}%
       \node(Text)[draw=black!20,fill=white,rounded corners,%
             inner sep=2ex,text width=\RoundedBoxWidth]%
             {\usebox{\GrayRoundedBox}};
        \coordinate(x) at (current bounding box.north west);
        \node [draw=white,rectangle,inner sep=3pt,anchor=north west,fill=white] 
        at ($(x)+(6pt,.75em)$) {\boxheading};
    \end{tikzpicture}
    \end{center}}
    
\newenvironment{defproblemx}[2][]{\noindent\ignorespaces%
                                \FrameSep=6pt%
                                \parindent=0pt%
                \vspace*{-1.5em}
                \ifthenelse{\isempty{#1}}{%
                  \begin{GrayBox}{\textsc{#2}}%
                }{%
                  \begin{GrayBox}{\textsc{#2}  parameterized by~{#1}}%
                }
                \begin{tabular*}{\textwidth}{@{\hspace{.1em}} >{\itshape} p{1.8cm} p{0.8\textwidth} @{}}%
            }{
                \end{tabular*}%
                \end{GrayBox}%
                \ignorespacesafterend
            }  

\newcommand{\defproblem}[3]{
  \begin{defproblemx}{#1}
    Input:  & #2 \\
    Question: & #3
  \end{defproblemx}
}%

\title{Tight Approximation and Kernelization Bounds for Vertex-Disjoint Shortest Paths}
\date{}

\ifjourn
\raggedbottom
\fi

\begin{document}
\ifjourn
\title[Tight Approximation and Kernelization Bounds for Vertex-Disjoint Shortest Paths]{Tight Approximation and Kernelization Bounds for Vertex-Disjoint Shortest Paths\footnote{A preliminary version of this paper was presented at the 42nd International Symposium on Theoretical Aspects of Computer Science (STACS~'25)~\cite{BFG25}. This version contains additional hardness results regarding subexponential-time algorithms and planar graphs (Propositions~1~and~2).}}

\author{\fnm{Matthias} \sur{Bentert}}\email{matthias.bentert@uib.no}

\author{\fnm{Fedor V.} \sur{Fomin}}\email{fedor.fomin@uib.no}

\author*{\fnm{Petr A.} \sur{Golovach}}
\email{petr.golovach@uib.no}

\affil{\orgname{University of Bergen}, \orgaddress{ \city{Bergen}, \country{Norway}}}

\abstract{
We examine the possibility of approximating \mdsp. In this problem,
the input is an edge-weighted (directed or undirected) $n$-vertex graph~$G$ along with~$k$~terminal pairs~$(s_1,t_1),(s_2,t_2),\ldots,(s_k,t_k)$. The task is to connect as many terminal pairs as possible by pairwise vertex-disjoint paths such that each path is a shortest path between the respective terminals. 
Our work is anchored in the recent breakthrough by Lochet [SODA '21], which demonstrates the polynomial-time solvability of the problem for a fixed value of $k$. 

Lochet's result implies the existence of a polynomial-time $ck$-approximation for \mdsp, where $c \leq 1$ is a constant. (One can guess~$\nicefrac{1}{c}$ terminal pairs to connect in~$k^{O(\nicefrac{1}{c})}$~time and then utilize Lochet's algorithm to compute the solution in~$n^{f(\nicefrac{1}{c})}$~time.)
Our first result suggests that this approximation algorithm is, in a sense, the best we can hope for.
More precisely, assuming the gap-ETH, we exclude the existence of an~\mbox{$o(k)$-approximation} within~$f(k)\poly(n)$~time for any function~$f$ that only depends on~$k$.

Our second result demonstrates the infeasibility of achieving an approximation ratio of $m^{\nicefrac{1}{2}-\varepsilon}$ in polynomial time, unless P $=$ NP. It is not difficult to show that a greedy algorithm selecting a path with the minimum number of arcs results in a $\lceil\sqrt{\ell}\rceil$-approximation, where~$\ell$~is the number of edges in all paths of an optimal solution. Since $\ell \leq n$, this underscores the tightness of the~$m^{\nicefrac{1}{2}-\varepsilon}$-inapproximability bound. 
Additionally, we establish that \mdsp{} can be solved in~$2^{O(\ell)} \poly(n)$ time, but does not admit a polynomial kernel in~$\ell$.
Moreover, it cannot be solved in~$2^{o(\ell)}\poly(n)$ time.
Our hardness results hold for undirected graphs with unit weights, while our positive results extend to scenarios where the input graph is directed and features arbitrary (non-negative) edge weights.
}

\keywords{Parameterized (In)Approximability, Fixed-parameter tractability, ETH}

\fi
\maketitle

\ifjourn
\else
\begin{abstract}
We examine the possibility of approximating \mdsp. In this problem,
the input is an edge-weighted (directed or undirected) $n$-vertex graph~$G$ along with~$k$~terminal pairs~$(s_1,t_1),(s_2,t_2),\ldots,(s_k,t_k)$. The task is to connect as many terminal pairs as possible by pairwise vertex-disjoint paths such that each path is a shortest path between the respective terminals. 
Our work is anchored in the recent breakthrough by Lochet [SODA '21], which demonstrates the polynomial-time solvability of the problem for a fixed value of $k$. 

Lochet's result implies the existence of a polynomial-time $ck$-approximation for \mdsp, where $c \leq 1$ is a constant. (One can guess~$\nicefrac{1}{c}$ terminal pairs to connect in~$k^{O(\nicefrac{1}{c})}$~time and then utilize Lochet's algorithm to compute the solution in~$n^{f(\nicefrac{1}{c})}$~time.)
Our first result suggests that this approximation algorithm is, in a sense, the best we can hope for.
More precisely, assuming the gap-ETH, we exclude the existence of an~\mbox{$o(k)$-approximation} within~$f(k)\poly(n)$~time for any computable function~$f$ that only depends on~$k$.

Our second result demonstrates the infeasibility of achieving an approximation ratio of $m^{\nicefrac{1}{2}-\varepsilon}$ in polynomial time, unless P $=$ NP. It is not difficult to show that a greedy algorithm selecting a path with the minimum number of arcs results in a $\lceil\sqrt{\ell}\rceil$-approximation, where~$\ell$~is the number of edges in all paths of an optimal solution. Since $\ell \leq \min(n,m)$, this underscores the tightness of the~$m^{\nicefrac{1}{2}-\varepsilon}$-inapproximability bound. 
Additionally, we establish that \mdsp{} can be solved in~$2^{O(\ell)} \poly(n)$ time, but does not admit a polynomial kernel in~$\ell$.
Moreover, it cannot be solved in~$2^{o(\ell)}\poly(n)$ time unless the exponential time hypothesis fails.
Our hardness results hold for undirected graphs with unit weights, while our positive results extend to scenarios where the input graph is directed and features arbitrary (non-negative) edge weights.
\end{abstract}
\fi

\section{Introduction}
We study a variant of the well-known problem \disp. In the latter, the input comprises a  (directed or undirected) graph~$G$ and~$k$ terminal pairs. The task is to identify whether pairwise vertex-disjoint paths can connect all terminals. \disp{} has long been established as NP-complete \cite{Kar75} and has played a pivotal role in the graph-minor project by Robertson and Seymour \cite{RS95}.

Eilam-Tzoreff \cite{Eilam-Tzoreff98} introduced a variant of \disp{} where all paths in the solution must be \emph{shortest} paths between the respective terminals. The parameterized complexity of this variant,  known as \dsp,  was recently resolved~\cite{Loc21} and subsequently the running time improved~\cite{BNRZ23}: The problem, parameterized by $k$, is W[1]-hard and in~XP (that is, polynomial-time solvable for any constant~$k$) for undirected graphs. 
On directed graphs, the problem is NP-hard already for $k=2$ if zero-weight edges are allowed \cite{FHW80}. The problem is solvable in polynomial time for~$k=2$ for positive edge weights \cite{BK17}. It is NP-hard when~$k$ is part of the input and the complexity for constant~$k > 2$ remains open.

The optimization variant of \dsp, where not necessarily all terminal pairs need to be connected, but at least $p$ of them, is referred to as \mdsp.

\defproblem{\mdsp}
{A graph~$G=(V,E)$, an edge-length function~$w \colon E \rightarrow \mathds{Q}_{\geq 0}$, terminal pairs~$(s_1,t_1), (s_2,t_2), \ldots, (s_k,t_k)$ where $s_i\neq t_i$ for each~$i\in[k]$, and an integer~$p$.}
{Is there a set~$S \subseteq [k]$ with~$|S| \geq p$ such that there is a collection~${\mathcal{C}=\{P_i\}_{i\in S}}$ of pairwise vertex-disjoint paths such that 
for each~$i \in S$,  path $P_i$ is a shortest path from~$s_i$ to~$t_i$?}


A few remarks are in order. In the literature concerning \disp{} and its variants, one usually distinguishes between vertex-disjoint and internally vertex-disjoint paths.
In the latter, two paths in a solution might share common endpoints while in the former, all paths must be completely vertex disjoint---including the two ends.
We focus on the variant where paths must be completely vertex-disjoint, but most of our results also hold for internally vertex-disjoint paths.

Note that \dsp{} is a special case of \mdsp{} with~$p=k$. 
For the maximization version, we are not given $p$ as input but are instead asked to find a set $S$ that is as large as possible. Slightly abusing notation, we do not distinguish between these two variants and refer to both as \mdsp.

While the parameterization by $k$ yields strong hardness results (both in terms of parameterized complexity and, as we will show later, approximation), another natural parameterization would be the sum of path lengths in a solution.
We initiate the study of a related parameter~$\ell$, the minimum number of edges in an optimal solution (assuming the instance is a yes-instance, otherwise, we define~$\ell=n$).
If we confine all edge weights to be positive integers, then $\ell$ serves as a lower bound for the sum of path lengths. Since our hardness results apply to unweighted graphs, studying~$\ell$ instead of the sum of path lengths does not weaken the negative results and~$\ell$ proves to be very useful for approximation and parameterized algorithms.
Note that the sum of path lengths is not a suitable parameter as dividing all edge weights by~$m \cdot w_{\max}$ (where~$w_{\max}$ is the maximum weight of any edge in the input) yields an equivalent instance where the sum of path lengths in the solution is at most one.

For the parameterized complexity of \mdsp, we note that the results for \dsp{}~\cite{BNRZ23,Loc21} for the parameterization by $k$ directly translate for \mdsp{} parameterized by $p$. The problem is~W[1]-hard as a generalization of \dsp{}, and to obtain an XP algorithm, it is sufficient to observe that in $n^{O(p)}$ time we can guess a set $S\subseteq[k]$ of size $p$ and apply the XP algorithm for \dsp{} for the selected set of terminal pairs.     

Before the recent work of Chitnis, Thomas, and Wirth~\cite{CTW24}, little was known about approximation algorithms for \mdsp. Chitnis, Thomas, and   Wirth demonstrated that no~$(2-\varepsilon)$-approximation could be achieved in time $f(k) n^{o(k)}$  assuming the gap-ETH. 

For the related \mdp, where the task is to connect the maximum number of terminal pairs by disjoint but not necessarily shortest paths, 
 $O(\sqrt{n})$-approximation algorithms are known~\cite{Klein96,KS98}. The best known lower bounds for this variant are~$2^{\Omega(\sqrt{\log n})}$ and~$n^{\Omega(\nicefrac{1}{(\log \log n)^2})}$. 
The first lower bound 
 holds even if the input graph is an unweighted planar graph, while the second holds even if the input graph is an unweighted grid graph \cite{CKN21,CKN22}. For these two special cases, there are approximation algorithms achieving ratios $\tilde{O}(n^{\nicefrac{9}{19}})$ and~$\tilde{O}(n^{\nicefrac{1}{4}})$, respectively~\cite{CKN21,CKN22}.

When requiring the solution paths to be edge-disjoint rather than vertex-disjoint, it is known that even computing a~$m^{\nicefrac{1}{2}-\varepsilon}$-approximation is NP-hard in the directed setting~\cite{GKRSY03}.
There have also been some studies on relaxing the notion so that each edge appears in at most~$c > 1$ of the solution paths.  
The integer~$c$ is called the congestion and the currently best known approximation algorithm achieves a $\poly(\log n)$-approximation with~$c=2$~\cite{CL16}.

\medskip\noindent\textbf{Our results.}
We show that computing a~$m^{\nicefrac{1}{2}-\varepsilon}$-approximation for \mdsp{} is NP-hard for any~$\varepsilon >0$ (\Cref{thm:nosqrt}).
Moreover in terms of FPT-approximations, we demonstrate in \Cref{thm:noapprox} that any~$k^{o(1)}$-approximation in time ${f(k)\poly(n)}$ implies FPT~$=$~W[1]
and that it is impossible to achieve an $o(k)$-approximation in time ${f(k)\poly(n)}$  unless the gap-ETH fails.
This significantly improves the current state of approximation results for  \mdsp{}  in two ways.
First, we use the weaker assumption~FPT~$\neq$~W[1] instead of the gap-ETH. Second, our theorem excludes approximation factors polynomial in the input size, rather than only constant factors larger than 2 as shown by  Chitnis et al.~\cite{CTW24}.

We complement the first lower bound by showing that a simple greedy strategy for \textsc{Maximum Vertex-Disjoint Paths} achieves a~$\lceil\sqrt{\ell}\rceil$-approximation for \mdsp{} (\Cref{prop:lapprox}).
In \Cref{prop:fptl,thm:nopkl}, we show that \mdsp{} is fixed-parameter tractable when parameterized by~$\ell$, but it does not admit a polynomial kernel.
The algorithm runs in~$2^{O(\ell)}\poly(n)$ time and we also exclude~$2^{o(\ell)}\poly(n)$ algorithms under the ETH in \cref{thm:main-ETH}.
Interestingly, our reduction also excludes~$2^{o(\sqrt{n})}$-time algorithms for planar input graphs.
We mention that our hardness results hold for undirected graphs with unit weights, and all our positive results hold even for directed and edge-weighted input graphs.
We summarize our results  in~\cref{tab:results}.
\begin{table}[t]
\centering
\caption{Overview of our results. New results are bold. All hardness results hold for unweighted and undirected graphs, while all new algorithmic results hold even for directed graphs with arbitrary non-negative edge weights.}
\begin{tabular}{c|c|c}
Parameter & Exact & Approximation \\\hline
no & NP-complete and \textbf{no $2^{o(n+m)}$} & \textbf{no $\mathbf{m^{\nicefrac{1}{2}-\varepsilon}}$-approximation in $\text{poly}(n)$ time} \\\hline
$k$ & XP and W[1]-hard & \textbf{no $\mathbf{o(k)}$-approximation in~$\mathbf{f(k)\text{poly}(n)}$ time} \\\hline
$\ell$ & \textbf{FPT} and \textbf{no poly kernel}  & \textbf{$\mathbf{\lceil\sqrt{\ell}\rceil}$-approximation} 
\end{tabular}
\label{tab:results}
\end{table}


\section{Preliminaries}
\label{sec:prelim}
For a positive integer~$x$, we denote by~$[x] = \{1,2,\dots,x\}$ the set of all positive integers at most~$x$.
We denote by~$G = (V,E)$ a graph and by~$n$ and~$m$ the number of vertices and edges in~$G$, respectively.
A graph~$G$ is said to be~\emph{$k$-partite} if~$V$ can be partitioned into~$k$ disjoint sets~$V_1,V_2,\ldots,V_k$ such that each set~$V_i$ induces an independent set, that is, there is no edge~$\{u,v\}\in E$ with~$\{u,v\}\subseteq V_i$ for some~$i \in [k]$.
The \emph{degree} of a vertex~$v$ is the number of edges in~$E$ that contain~$v$ as an endpoint and the \emph{maximum degree} of a graph is the highest degree of any vertex in the graph.

A \emph{path} in a graph~$G$ is a sequence~$(v_0,v_1,\dots,v_\ell)$ of distinct vertices such that each pair~$(v_{i-1},v_i)$ is connected by an edge in~$G$.
The first and last vertex $v_0$ and $v_\ell$ are called the \emph{ends} of $P$.
We also say that~$P$ is a path \emph{from} $v_0$ \emph{to} $v_\ell$ or a $v_0$-$v_\ell$-path.
The length of a path is the sum of its edge lengths or simply the number~$\ell$ of edges if the graph is unweighted.
For two vertices~$v,w$, we denote the length of a shortest $v$-$w$-path in~$G$ by~$\dist_G(v,w)$ or~$\dist(v,w)$ if the graph~$G$ is clear from the context.


We assume the reader to be familiar with the big-O notation and basic concepts in computational complexity like NP-completeness and reductions.
We refer to the textbook by Garey and Johnson \cite{GJ79} for an introduction.
Throughout this paper, we reduce from~\textsc{3-Sat}, \textsc{Clique}, and~\textsc{Multicolored Clique}, three of the most fundamental problems in theoretical computer science.
We state their definitions for the sake of completeness.

\defproblem{3-Sat}
{A Boolean formula~$\phi$ in conjunctive normal form where each clause contains at most three literals.}
{Is $\phi$ satisfiable?}

\defproblem{Clique}
{An undirected graph~$G$ and an integer~$k$.}
{Is there a clique of size~$k$ in~$G$, that is, a set of~$k$ vertices that are pairwise neighbors?}

\defproblem{Multicolored Clique}
{An undirected $x$-partite graph~$G$ and an integer~$k \leq x$.}
{Is there a clique of size~$k$ in~$G$?}

It is more common to state \textsc{Multicolored Clique} for $x=k$ and, in this case, the partitions of the input graph are often modelled as colors and a clique is called multicolored as it contains exactly one vertex from each color class.
However, it is convenient for us to allow~$k\leq x$ and we call a clique in this context \emph{multicolored} if it contains at most one vertex from each color class.

For a detailed introduction to parameterized complexity and kernelization, we refer the reader to the textbooks by Cygan et al.~\cite{CyganFKLMPPS15} and Fomin et al.~\cite{FominLSZ19}.
A \emph{parameterized problem}~$P$ is a language containing pairs~$(I,\rho)$ where~$I$ is an instance of an (unparameterized) problem and~$\rho$ is an integer called the \emph{parameter}.
In this paper, the parameter will usually be either the number~$k$ of terminal pairs or the minimum number~$\ell$ of edges in a solution (a maximum collection of vertex-disjoint shortest paths between terminal pairs).
A parameterized problem~$P$ is \emph{fixed-parameter tractable} if there exists an algorithm solving any instance~$(I,\rho)$ of~$P$ in~$f(\rho)\poly(|I|)$ time, where~$f$ is some computable function only depending on~$\rho$.
To show that a problem is presumably not fixed-parameter tractable, one usually shows that the problem is hard for a complexity class known as W[1].
The class \emph{XP} contains all parameterized problems which can be solved in~$|I|^{f(\rho)}$ time, that is, in polynomial time if~$\rho$ is constant.
A parameterized problem is said to admit a \emph{polynomial kernel}, if there is a polynomial-time algorithm that given an instance~$(I,\rho)$ computes an equivalent instance~$(I',\rho')$ (called the \emph{kernel}) such that~$|I'|+\rho'$ is upper-bounded by a polynomial in~$\rho$.
It is known that any parameterized problem admitting a polynomial kernel is fixed-parameter tractable and each fixed-parameter tractable problem is contained in XP.

An~$\alpha$-approximation algorithm for a maximization problem is a polynomial-time algorithm that for any input returns a solution of size at least~$\nicefrac{\opt}{\alpha}$ where~$\opt$ is the size of an optimal solution.
A parameterized~$\alpha$-approximation algorithm also returns a solution of size at least~$\nicefrac{\opt}{\alpha}$, but its running time is allowed to be~$f(\rho)\poly(n)$, where~$\rho$ is the parameter and~$f$ is some computable function only depending on~$\rho$.
In this work, we always consider (unparameterized) approximation algorithms unless we specifically state a parameterized running time.

To exclude an~$\alpha$-approximation for an optimization problem, one can use the framework of \emph{approxi\-mation-preserving reductions}.
A strict approximation-preserving reduction is a pair of algorithms---called the \emph{reduction algorithm} and the \emph{solution-lifting algorithm}---that both run in polynomial time and satisfy the following.
The reduction algorithm takes as input an instance~$I$ of a problem~$L$ and produces an instance~$I'$ of a problem~$L'$.
The solution-lifting algorithm takes any solution~$S'$ of~$I'$ and transforms it into a solution~$S$ of~$I$ such that if~$S'$ is an~$\alpha$-approximation for~$I'$ for some~$\alpha \geq 1$, then~$S$ is an~$\alpha$-approximation for~$I$.
If a strict approximation-preserving reduction from~$L$ to~$L'$ exists and~$L$ is hard to approximate within some value~$\beta$, then~$L'$ is also hard to approximate within~$\beta$.

The \emph{exponential time hypothesis (ETH)} introduced by Impagliazzo and Paturi \cite{IP01} states that there is some~$\varepsilon > 0$ such that each (unparameterized) algorithm solving \textsc{3-Sat} takes at least~$2^{\varepsilon n + o(n)}$~time, where~$n$ is the number of variables in the input instance.
A stronger conjecture called the \emph{gap-ETH} was independently introduced by Dinur~\cite{Din16} and Manurangsi and Raghavendra~\cite{MR17}.
It states that there exist~$\varepsilon, \delta > 0$ such that any $(1-\varepsilon)$-approximation algorithm for \textsc{Max 3-Sat}\footnote{\textsc{Max 3-Sat} is a generalization of \textsc{3-Sat} where the question is not whether the input formula is satisfiable but rather how many clauses can be satisfied simultaneously.} takes at least~$2^{\delta n+o(n)}$ time.

\section{Approximation}
\label{sec:approx}
In this section, we show that \mdsp{} cannot be~$o(k)$-approx\-imated in~$f{(k)\poly(n)}$~time unless the gap-ETH fails and no~$m^{\nicefrac{1}{2}-\varepsilon}$-approximation in polynomial time unless P~$=$~NP.
We complement the latter result by developing a~$\lceil\sqrt{\ell}\rceil$-approximation algorithm that runs in polynomial time.
We start with a reduction based on a previous reduction by Bentert et al.~\cite{BNRZ23} and make it approximation-preserving.\footnote{We mention in passing that essentially the same modification to the reduction by Bentert et al. has been found by Akmal et al.~\cite{AVW24} in independent research. While we use the modification to show stronger inapproximability bounds, they use it to show stronger fine-grained hardness results with respect to the minimum degree of a polynomial-time algorithm for constant~$k$.}
Moreover, our result is tight in the sense that a~$k$-approximation can be computed in polynomial time by simply connecting any terminal pair by a shortest path.
A~$ck$-approximation for any constant~$c \leq 1$ can also be computed in polynomial time by guessing~$\frac{1}{c}$ terminal pairs to connect and then using the XP-time algorithm by Bentert et al.~\cite{BNRZ23} to find a solution.
Note that since~$c$ is a constant, the XP-time algorithm for~$\frac{1}{c}$~terminal pairs runs in polynomial time.

\begin{theorem}
    \label{thm:noapprox}
    \mdsp{} does not admit a~$k^{o(1)}$-approximation in~$f(k)\poly(n)$~time unless FPT~$=$~W[1].
    Assuming the gap-ETH, it cannot be~$o(k)$-approximated in~$f{(k)\poly(n)}$~time.
    All of these results hold even for subcubic graphs with terminals of degree one.
\end{theorem}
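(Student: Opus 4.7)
The plan is to construct a strict approximation-preserving reduction from \textsc{Multicolored Clique} to \mdsp{} that maps a $k$-color instance to a \mdsp{} instance with $\Theta(k)$ terminal pairs, in such a way that the optimum number of routable terminal pairs equals the maximum size of a multicolored clique exactly. Once such a reduction is in place, both hardness statements follow from known parameterized inapproximability of \textsc{Clique}: under FPT $\ne$ W[1] there is no $k^{o(1)}$-approximation in FPT time, and under gap-ETH there is no $o(k)$-approximation in FPT time. Because the reduction preserves the parameter up to a constant factor and matches optima exactly, any inapproximability gap transfers verbatim to \mdsp.

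I would adapt the W[1]-hardness reduction of Bentert et al.~\cite{BNRZ23}, which introduces one terminal pair $(s_i,t_i)$ per color class so that shortest $s_i$-$t_i$-paths correspond bijectively to the choice of a vertex $v \in V_i$, and uses shared vertices to force global adjacency of the selected vertices. That reduction is \emph{all-or-nothing}: all $k$ pairs are routable iff there is a $k$-clique, and partial solutions need not correspond to partial cliques, which is exactly why it fails to preserve approximation. The crucial modification is to make the adjacency checks pairwise and ``local'' across color pairs, so that routing any subset $S \subseteq [k]$ of terminal pairs forces the encoded vertices in $\bigcup_{i \in S} V_i$ to be pairwise adjacent and hence to form a multicolored clique of size $|S|$. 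Dually, any multicolored clique of size $c$ must admit $c$ pairwise vertex-disjoint shortest paths of the prescribed lengths. The solution-lifting algorithm then simply reads the encoded vertex out of each routed path.

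The subcubic and terminal-degree-at-most-two requirements I would enforce by standard local modifications: subdividing high-degree internal vertices into balanced binary trees of distance-preserving copies (with unit weights, shortest-path lengths scale uniformly), and attaching each terminal via a short pendant path so that it has degree one in the final graph. The main obstacle I anticipate is ensuring that missing paths in a partial solution do not ``free up'' shared vertices in a way that lets the remaining paths be routed without corresponding to a clique, which is precisely where the original construction of~\cite{BNRZ23} falls short of approximation preservation. Localizing the sharing so that each color pair $\{i,j\}$ is handled by its own edge-witness gadget, while still keeping all encoded $s_i$-$t_i$-paths simultaneously shortest and of a uniform length, is the technical heart of the construction.
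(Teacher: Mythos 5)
Your proposal matches the paper's approach: a strict approximation-preserving reduction from \textsc{Multicolored Clique} to \mdsp{} with one terminal pair per color, per-pair crossing gadgets that locally witness adjacency so that any routable subset $S$ of terminal pairs corresponds exactly to a multicolored clique of size $|S|$, and degree reduction via binary trees, after which both statements follow from the cited parameterized inapproximability of \textsc{Clique}. The one thing you do not supply is the concrete grid-like layout (paths of length~$2$ and~$2\nu$ plus height-$\lceil\log\nu\rceil$ trees) that keeps every canonical $s_i$-$t_i$-path simultaneously shortest and of uniform length, which is where the paper's real work lies---but you correctly identify this as the technical heart, and your plan otherwise coincides with the paper's.
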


\begin{proof}
    We present a strict approximation-preserving reduction from \textsc{Multicolored Clique} to \mdsp{} such that the maximum degree is three and each terminal vertex has degree one.
    Moreover, the maximum number~$\opt$ of vertex-disjoint shortest paths between terminal pairs will be equal to the largest clique in the original instance.
    The theorem then follows from the fact that a~$f(k)\poly(n)$-time $k^{o(1)}$-approximation for \textsc{Clique} would imply that FPT~$=$~W[1]~\cite{ChenFLL24,SK22}, a~$f(k)\poly(n)$-time $o(k)$-approximation for \textsc{Clique} refutes the gap-ETH \cite{CCK+17}, and that the textbook reduction from \textsc{Clique} to \textsc{Multicolored Clique} only increases the number of vertices by a quadratic factor and does not change the size of a largest clique in the graph \cite{CyganFKLMPPS15}.

    The reduction is depicted in \cref{fig:hardness-example} and works as follows.
        \begin{figure}[t]
		\centering
		\begin{tikzpicture}
			\newcommand{\colorA}{blue}
			\newcommand{\colorB}{red!90!black}
			\newcommand{\colorC}{green!60!black}
			\newcommand{\colorD}{black}
			
			\newcommand{\smallDist}{0.2}
			\newcommand{\smallerDist}{0.05}
			\newcommand{\cliqueVertices}{1, 3, 2, 1} 
		
			\begin{scope}[xshift=5cm,yshift=-1cm,scale=0.7]
				\foreach \x in {1,...,4}{
					\node[vertex,fill=\colorA,
						label=above:{$1,\x$}
						] (v-1\x) at (\x,0) {};
					\node[vertex,fill=\colorB,
						label=right:{$2,\x$}
						] (v-2\x) at (5,-\x) {};
					\node[vertex,fill=\colorC,
						label=below:{$3,\x$}
						] (v-3\x) at (\x,-5) {};
					\node[vertex,fill=\colorD,
						label=left:{$4,\x$}
						] (v-4\x) at (0,-\x) {};
				}
			\end{scope}
			
			\begin{scope}[xshift=0cm, scale=2]
			\tikzset{vertex/.style ={circle,inner sep=1pt,minimum size=4pt}}
				\foreach[count=\i] \color in {\colorA,\colorB,\colorC,\colorD}{
					\pgfmathtruncatemacro\j{\i + 4};
					\node[vertex,fill=\color,label=below:{$s_\i$}] (s-\i) at (0,-\i) {};
    				\node[vertex,fill=\color,label=right:{$t_\i$}] (tt-\i) at (\i,-5) {};
					
					\foreach \j in {1,...,4}
					{
    					\pgfmathtruncatemacro\k{5 - \j};
						\node[vertex,fill=\color] (sc-\j\i) at (\smallDist,2.5 * \smallDist -\i - \j * \smallDist) {};
						\draw[dotted,thick] (sc-\j\i) -- (s-\i);
						\node[vertex,fill=\color] (ttc-\k\i) at (\i + \j * \smallDist - 2.5 * \smallDist,-4.7) {};
						\draw[dotted,thick] (ttc-\k\i) -- (tt-\i);
      
        				\begin{pgfonlayer}{background}
                            \node[vertex] (m1-\j\i) at (\i - 3 * \smallDist - \smallerDist,2.5 * \smallDist -\i - \j * \smallDist) {};
                            \node[vertex] (m2-\k\i) at (\i + \j * \smallDist - 2.5 * \smallDist, - 3 * \smallDist - \i - \smallerDist) {};
                        \end{pgfonlayer}
					}
				}
						
				\begin{pgfonlayer}{background}
                	\foreach[count=\i] \j in \cliqueVertices
					{
						\foreach[count=\x] \y in \cliqueVertices
						{
							\ifnum \i<\x%
								\draw[very thick] (v-\i\j) -- (v-\x\y);
							\fi
						}
						\draw[path] (s-\i.center) -- (sc-\j\i.center) -- (m1-\j\i.center) to[bend left=45] (m2-\j\i.center) -- (ttc-\j\i.center) -- (tt-\i.center);
					}
					\foreach \i in {1,...,4} \foreach \x in {\i,...,4} \foreach \j in {1,...,4} \foreach \y in {1,...,4}
                    {
                        \ifnum \x > \i
                            \xintifboolexpr { \i\j\x\y == 1333 || \i\j\x\y == 1233 || \i\j\x\y == 3143 || \i\j\x\y == 2331 || \i\j\x\y == 1341 || \i\j\x\y == 1223 || \i\j\x\y == 1143 || \i\j\x\y == 3342 || \i\j\x\y == 1222 || \i\j\x\y == 1231 || \i\j\x\y == 2143 || \i\j\x\y == 2242 || \i\j\x\y == 2331 || \i\j\x\y == 3242 || \i\j\x\y == 1123 || \i\j\x\y == 1132 || \i\j\x\y == 1141 || \i\j\x\y == 2332 || \i\j\x\y == 2341 || \i\j\x\y == 3241 || \i\j\x\y == 1424 || \i\j\x\y == 1434 || \i\j\x\y == 1444 || \i\j\x\y == 2434 || \i\j\x\y == 2444 || \i\j\x\y == 3444} 
                            {
                                \draw (v-\i\j) -- (v-\x\y);
                                \node[vertex,fill=black,minimum size=2] (a-\i\j\x\y) at (\i - \j * \smallDist + 2.5 * \smallDist,2.5 * \smallDist -\x - \y * \smallDist+\smallerDist) {};
                                \node[vertex,fill=black,minimum size=2] (b-\i\j\x\y) at (\i - \j * \smallDist + 2.5 * \smallDist,2.5 * \smallDist -\x - \y * \smallDist-\smallerDist) {};
                                \node[vertex,fill=black,minimum size=2] (c-\i\j\x\y) at (\i - \j * \smallDist + 2.5 * \smallDist-\smallerDist,2.5 * \smallDist -\x - \y * \smallDist) {};
                                \node[vertex,fill=black,minimum size=2] (d-\i\j\x\y) at (\i - \j * \smallDist + 2.5 * \smallDist+\smallerDist,2.5 * \smallDist -\x - \y * \smallDist) {};
                            }{
                                \node[vertex,fill=black,minimum size=2] (a-\i\j\x\y) at (\i - \j * \smallDist + 2.5 * \smallDist-.6*\smallerDist,2.5 * \smallDist -\x - \y * \smallDist+.6*\smallerDist) {};
                                \node[vertex,fill=black,minimum size=2] (b-\i\j\x\y) at (\i - \j * \smallDist + 2.5 * \smallDist+.6*\smallerDist,2.5 * \smallDist -\x - \y * \smallDist-.6*\smallerDist) {};
                                \node[vertex,fill=black,minimum size=2] (c-\i\j\x\y) at (\i - \j * \smallDist + 2.5 * \smallDist-.6*\smallerDist,2.5 * \smallDist -\x - \y * \smallDist+.6*\smallerDist) {};
                                \node[vertex,fill=black,minimum size=2] (d-\i\j\x\y) at (\i - \j * \smallDist + 2.5 * \smallDist+.6*\smallerDist,2.5 * \smallDist -\x - \y * \smallDist-.6*\smallerDist) {};
                            }
                        \fi
                    }
                    \node[label={\footnotesize{$s_1^1$}}] at (.3,-.8) {};
                    \node[label={\footnotesize{$s_1^4$}}] at (.3,-1.7) {};
                    \node[label={\footnotesize{$t_1^1$}}] at (1.45,-4.9) {};
                    \node[label={\footnotesize{$t_1^4$}}] at (.55,-4.9) {};
                    \node[label={\footnotesize{$u^{1,2}_{1,1}$}}] at (1.3,-1.8) {};
                    \node[label={\footnotesize{$v^{1,2}_{1,1}$}}] at (1.55,-2) {};
                    \node[label={\footnotesize{$u^{2,3}_{3,1}$}}] at (1.85,-2.75) {};
                    \node[label={\footnotesize{$x^{1,3}_{4,4}$}}] at (.47,-3.63) {};

                    \foreach \i in {1,...,4} \foreach \x in {\i,...,4} \foreach \j in {1,...,4} \foreach \y in {1,...,4}
                    {
                        \ifnum \x > \i
                            \pgfmathtruncatemacro\k{\j+1};
                            \pgfmathtruncatemacro\l{\y+1};
                            \draw (a-\i\j\x\y) -- (b-\i\j\x\y);
                            \draw (c-\i\j\x\y) -- (d-\i\j\x\y);
                            \ifnum \j < 4
                                \draw[blue] (c-\i\j\x\y) -- (d-\i\k\x\y); 
                            \fi
                            \ifnum \y < 4
                                \draw[blue] (b-\i\j\x\y) -- (a-\i\j\x\l); 
                            \fi
                        \fi
                    }

                    \foreach \i in {1,...,3} \foreach \x in {\i,...,4} \foreach \j in {1,...,4} \foreach \y in {4} \foreach \z in {1}
                    {
                        \pgfmathtruncatemacro\h{\i+1};
                        \ifnum \x > \h
                            \draw[blue] (d-\i\z\x\j) -- (c-\h\y\x\j);
                        \fi
                        \ifnum \x > \i
                            \ifnum \x < 4
                                \pgfmathtruncatemacro\h{\x+1};
                                \draw[blue] (b-\i\j\x\y) -- (a-\i\j\h\z);
                            \fi
                        \fi
                    }

                    \foreach \i in {2,...,4} \foreach \j in {1,...,4} \foreach \y in {4} \foreach \z in {1}
                    {
                        \draw[red] (sc-\j\i) -- (c-\z\y\i\j);
                    }
                    \foreach \i in {1,...,3} \foreach \j in {1,...,4} \foreach \y in {4} \foreach \z in {1}
                    {
                        \draw[red] (b-\i\j\y\y) -- (ttc-\j\i);
                    }
                    \foreach \i in {2,3} \foreach \j in {1,...,4} \foreach \y in {4} \foreach \z in {1}
                    {
                        \pgfmathtruncatemacro\f{\i-1};
                        \pgfmathtruncatemacro\g{\i+1};
                        \pgfmathtruncatemacro\k{5-\j};
                        \draw[blue] (d-\f\z\i\j) edge[bend left=45] (a-\i\j\g\z);
                    }
                    \foreach \j in {1,...,4} \foreach \y in {4} \foreach \z in {1} \foreach \a in {2} \foreach \b in {3}
                    {
                        \draw[red] (sc-\j\z) edge[bend left=45] (a-\z\j\a\z);
                        \draw[red] (d-\b\z\y\j) edge[bend left=45] (ttc-\j\y);
                    }

				\end{pgfonlayer}
			\end{scope}
		\end{tikzpicture}
		\caption{
			An illustration of the reduction from \textsc{Multicolored Clique} to \mdsp. \\
			\emph{Top right:} Example instance for \textsc{Multicolored Clique} with~$k=4$ colors and~$n=4$ vertices per color.
			A multicolored clique is highlighted (by thick edges). \\
			\emph{Bottom left:} The constructed instance with the four shortest paths corresponding to the vertices of the clique highlighted.
			Note that these paths are pairwise disjoint.
			The dotted edges (incident to~$s_i$ and~$t_i$ vertices) indicate binary trees (where all leaves have distance~$\lceil\log \nu \rceil$ from the root).
            Red edges indicate paths of length~$2\nu$ and blue edges indicate paths of length~$2$.
		}
		\label{fig:hardness-example}
	\end{figure}%
    Let~$G=(V,E)$ be a~$k$-partite graph (or equivalently a graph colored with~$k$ colors where all vertices of any color form an independent set) with~$\nu$ vertices of each color.
    Let~$V_i=\{v^i_1,v^i_2,\ldots,v^i_\nu\}$ be the set of vertices of color~$i \in [k]$ in~$G$.
    We start with a terminal pair~$(s_i,t_i)$ for each color~$i$ and a pair of (non-terminal) vertices~$(s_i^j,t_i^j)$ for each vertex~$v^i_j \in V_i$.
    Next for each color~$i$, we add a binary tree of height~$\lceil \log \nu \rceil$ where the vertices~$s_i^j$ are the leaves for all~$v_j^i \in V_i$.
    We make~$s_i$ adjacent to the root of the binary tree.
    Analogously, we add a binary tree of the same height with leaves~$t^i_j$ and make~$t_i$ adjacent to the root.
    Next, we add a crossing gadget for each pair of vertices~$(v_j^i,v_b^a)$ with~$i < a$.
    If~$\{v_j^i,v_b^a\} \in E$, then the gadget consists of four vertices~$u_{j,b}^{i,a},v_{j,b}^{i,a},x_{j,b}^{i,a}$, and~$y_{j,b}^{i,a}$ and edges~$\{u_{j,b}^{i,a},v_{j,b}^{i,a}\}$ and~$\{x_{j,b}^{i,a},y_{j,b}^{i,a}\}$.
    If~$\{v_j^i,v_b^a\} \notin E$, then the gadget consists of only two vertices~$u_{j,b}^{i,a}$ and~$v_{j,b}^{i,a}$ and the edge~$\{u_{j,b}^{i,a},v_{j,b}^{i,a}\}$.
    For the sake of notational convenience, we will in the latter case also denote~$u_{j,b}^{i,a}$ by~$x_{j,b}^{i,a}$ and~$v_{j,b}^{i,a}$ by~$y_{j,b}^{i,a}$.
    To complete the construction, we connect the different gadgets as follows.
    First, we connect via paths of length two~$v_{j,b}^{i,a}$ and~$u_{j,b+1}^{i,a}$ for all~$b<\nu$ and~$y_{j,b}^{i,a}$ and~$x_{j-1,b}^{i,a}$ for all~$j > 1$.
    Second, we connect via paths of length two the vertices~$v_{j,\nu}^{i,a}$ to~$u_{j,1}^{i,a+1}$ for all~$j \in [\nu]$ and all~$a < k$ and~$y^{i,a}_{1,b}$ to~$x^{i+1,a}_{\nu,b}$ for all~$b \in [\nu]$ and all~$i < a-1$.
    Third, we connect also via paths of length two~$y^{i,i+1}_{1,b}$ to~$u^{i+1,i+2}_{b,1}$ for all~$i < k-1$ and all~$b \in [\nu]$.
    Next, we connect via paths of length~$2\nu$ each vertex~$s_i^j$ to~$x^{1,i}_{\nu,j}$ for each~$i > 1$ and~$j \in [\nu]$.
    Similarly, $v^{i,k}_{j,\nu}$ is connected to~$t_i^j$ via paths of length~$2\nu$.
    Finally, we connect~$s^1_j$ with~$u^{1,2}_{j,1}$ for all~$j \in [\nu]$ and~$y^{k-2,k-1}_{1,j}$ with~$t^k_j$ for all~$j \in [\nu]$.
    This concludes the construction.

    We next prove that all shortest~$s_i$-$t_i$-paths are of the form
    \begin{align}
       s_i - s_i^j - x^{1,i}_{\nu,j} - y^{i-1,i}_{1,j} - u^{i,i+1}_{j,1} - v^{i,k}_{j,\nu} - t^j_i - t_i \label{eq:canonical}
    \end{align}
    for some~$j \in [\nu]$ and where the~$s_1$-$t_1$-paths go directly from~$s_i^j$ to~$u^{1,2}_{j,1}$ and the~$s_k$-$t_k$-paths go directly from~$y^{k-1,k}_{1,j}$ to~$t_j^i$.
    We say that the respective path is the \emph{$j$\textsuperscript{th} canonical path} for color~$i$.

    To show the above claim, first note that the distance from~$s_i$ to any vertex~$s^j_{i}$ is the same value~${x = \lceil \log \nu \rceil+1}$ for all pairs of indices~$i$ and~$j$.
    Moreover, the same holds for~$t_i$ and~$t_{i}^{j}$, each~$s_i$-$t_i$-path contains at least one vertex~$s_i^j$ and one vertex~$t_i^{j'}$ for some~$j,j' \in [\nu]$, and all paths of the form in \cref{eq:canonical} are of length~$y=2x + 4 \nu + 3 (k-1) \nu - 2$.
    We first show that each~$s_i$-$t_i$-path of length at most~$y$ contains an edge of the form~$y^{i,i+1}_{1,b}$ to~$u^{i+1,i+2}_{b,1}$.
    Consider the graph where all of these edges are removed.
    Note that due to the grid-like structure, the distance between~$s_i$ and~$x^{i',a}_{j,b}$ for any values~$i' \leq i \leq a,j$, and~$b$ is at least~$x+2\nu+3(i'-1)\nu+3(\nu-j)$ if~$i = a$ and at least~$x+2\nu+3(i'-1)\nu+3(a-i)\nu+3(\nu-j)+3b$ if~$i < a$.\footnote{We mention that there are some pairs of vertices~$x^{i_1,a_1}_{j_1,b_1}$ and~$x^{i_2,a_2}_{j_2,b_2}$, where the distance between the two is less than~$3(|i_1-i_2|+|a_1-a_2|)\nu+3(|j_1-j_2|+|b_1-b_2|)$. An example is the pair~$(x^{1,2}_{1,1}=u^{1,2}_{1,1},x^{1,2}_{2,2})$~in \cref{fig:hardness-example} with a distance of~$4$. However, in all examples it holds that~$i_1\nu-j_1 \neq i_2\nu-j_2$ and~$a_1\nu+b_1 \neq a_2\nu+b_2$ such that the left side is either smaller in both inequalities or larger in both inequalities. Hence, these pairs cannot be used as shortcuts as they move ``down and left'' instead of ``down and right'' in \cref{fig:hardness-example}.}
    Hence, all shortest~$s_i$-$t_i$-paths use an edge of the form~$y^{i,i+1}_{1,b}$ to~$u^{i+1,i+2}_{b,1}$ and the shortest path from~$s_i^j$ to some vertex~$y^{i,i+1}_{1,b}$ is to the vertex~$y^{i,i+1}_{1,j}$.
    Note that the other endpoint of the specified edge is~$u^{i,i+1}_{j,1}$ and the shortest path to~$t_i$ now goes via~$t_i^j$ for analogous reasons.
    Thus, all shortest~$s_i$-$t_i$-paths have the form~(\ref{eq:canonical}).
    
    We next prove that any set of~$p$ disjoint shortest paths between terminal pairs~$(s_i,t_i)$ in the constructed graph has a one-to-one correspondence to a multicolored clique of size~$p$ for any~$p$.
    For the first direction, assume that there is a set~$P$ of disjoint shortest paths between~$p$ terminal pairs.
    Let~$S \subseteq [k]$ be the set of indices such that the paths in~$P$ connect~$s_i$ and~$t_i$ for each~$i \in S$.
    Moreover, let~$j_i$ be the index such that the shortest~$s_i$-$t_i$-path in~$P$ is the~$j_i$\textsuperscript{th} canonical path for~$i$ for each~$i \in S$.
    Now consider the set~$K = \{v_{j_i}^i \mid i \in S\}$ of vertices in~$G$.
    Clearly~$K$ contains at most one vertex of each color and is of size~$p$ as~$S$ is of size~$p$.
    It remains to show that~$K$ induces a clique in~$G$.
    To this end, consider any two vertices~$v_{j_i}^i, v_{j_{i'}}^{i'} \in K$.
    We assume without loss of generality that~$i < i'$.
    By assumption, the~$j_i$\textsuperscript{th} canonical path for~$i$ and the~$j_{i'}$\textsuperscript{th} canonical path for~$i'$ are disjoint.
    This implies that~$u^{i,i'}_{j_i,j_{i'}} \neq x^{i,i'}_{j_i,j_{i'}}$ as the~$j_i$\textsuperscript{th} canonical path for~$i$ contains the former and the~$j_{i'}$\textsuperscript{th} canonical path for~$i'$ contains the latter.
    By construction, this means that~$\{v_{j_i}^i, v_{j_{i'}}^{i'}\} \in E$.
    Since the two vertices were chosen arbitrarily, it follows that all vertices in~$K$ are pairwise adjacent, that is,~$K$ induces a multicolored clique of size~$p$.

    For the other direction assume that there is a multicolored clique~$C = \{v_{j_1}^{i_1},v_{j_2}^{i_2},\ldots,v_{j_p}^{i_p}\}$ of size~$p$ in~$G$.
    We will show that the~$j_q$\textsuperscript{th} canonical path for~$i_q$ is vertex-disjoint from the~$j_r$\textsuperscript{th} canonical path for~$i_r$ for all~$q \neq r \in [p]$.
    Let~$q,r$ be two arbitrary distinct indices in~$[p]$ and let without loss of generality be~$q < r$.
    Note that the two mentioned paths can only overlap in vertices~$u^{i_q,i_r}_{j_q,j_r},v^{i_q,i_r}_{j_q,j_r},x^{i_q,i_r}_{j_q,j_r},$ or~$y^{i_q,i_r}_{j_q,j_r}$ and that the~$j_q$\textsuperscript{th} canonical path for~$i_q$ only contains vertices~$u^{i_q,i_r}_{j_q,j_r}$ and~$v^{i_q,i_r}_{j_q,j_r}$ and the~$j_r$\textsuperscript{th} canonical path for~$i_r$ only contains~$x^{i_q,i_r}_{j_q,j_r}$ and~$y^{i_q,i_r}_{j_q,j_r}$.
    Moreover, since by assumption~$v_{j_q}^{i_q}$ and~$v_{j_r}^{i_r}$ are adjacent, it holds by construction that~$u^{i_q,i_r}_{j_q,j_r}, v^{i_q,i_r}_{j_q,j_r}, x^{i_q,i_r}_{j_q,j_r}$, and~$y^{i_q,i_r}_{j_q,j_r}$ are four distinct vertices.
    Thus, we found vertex disjoint paths between~$p$ distinct terminal pairs.
    This concludes the proof of correctness.

    To finish the proof, observe that the constructed instance has maximum degree three, all terminal vertices have degree one, and the construction can be computed in polynomial time.
\end{proof}

We mention in passing that in graphs of maximum degree three with terminal vertices of degree one, two paths are vertex disjoint if and only if they are edge disjoint.
Hence, \cref{thm:noapprox} also holds for \textsc{Maximum Edge-Disjoint Shortest Paths}, the edge-disjoint version of \mdsp.

\begin{corollary}
    \label{cor:noapprox}
    \textsc{Maximum Edge-Disjoint Shortest Paths} does not admit a~$k^{o(1)}$-approximation in~$f(k)\poly(n)$~time unless FPT~$=$~W[1].
    Assuming the gap-ETH, it cannot be~$o(k)$-approximated in~$f{(k)\poly(n)}$~time.
    All of these results hold even for subcubic graphs with terminals of degree one.
\end{corollary}

We continue with an unparameterized lower bound by establishing that computing a~$m^{\frac{1}{2}-\varepsilon}$-approxi\-mation is NP-hard.
We mention that the reduction is quite similar to the reduction in the proof for \cref{thm:noapprox}. 

\begin{restatable}{theorem}{nosqrt} 
    \label{thm:nosqrt}
    Computing a~$m^{\nicefrac{1}{2}-\varepsilon}$-approximation for any $\varepsilon > 0$ for \mdsp{} is NP-hard.
\end{restatable}

{

\begin{proof}
    It is known that computing a~$n^{1-\varepsilon}$-approximation for \textsc{Clique} is NP-hard~\cite{Has96,Zuc07}.
    We present an approximation-preserving reduction from \textsc{Clique} to \mdsp{} based on \cref{thm:noapprox}.
    We use basically the same reduction as in \cref{thm:noapprox} but we start from an instance of \textsc{Clique} and have a separate terminal pair for each vertex in the graph.
    Moreover, we do not require the binary trees pending from the terminal vertices and neither do we require long induced paths (red edges in \cref{fig:hardness-example}).
    These are instead paths with one internal vertex.
    An illustration of the modified reduction is given in \cref{fig:hardness-example2}.
    \begin{figure}[t]
		\centering
		\begin{tikzpicture}
			\newcommand{\colorD}{black}
			
			\newcommand{\smallDist}{0.2}
			\newcommand{\smallerDist}{0.05}
		
			\begin{scope}[yshift=-1cm,yshift=-2.8cm]
                \node[circle,draw,label=$v_1$] (v1) at (0,2) {};
                \node[circle,draw,label=$v_2$] (v2) at (2,2) {} edge[very thick] (v1);
                \node[circle,draw,label=$v_3$] (v3) at (4,2) {} edge (v2);
                \node[circle,draw,label=below:$v_4$] (v4) at (0,0) {} edge (v1) edge (v3);
                \node[circle,draw,label=below:$v_5$] (v5) at (2,0) {} edge[very thick] (v1) edge[very thick] (v2) edge (v4);
                \node[circle,draw,label=below:$v_6$] (v6) at (4,0) {} edge[very thick] (v1) edge[very thick] (v2) edge (v3) edge[very thick] (v5);
            \end{scope}
			
			\begin{scope}[xshift=6.5cm, xscale=1.75, yscale=1.5]
			
				\foreach \i in {1,...,6}{
					\node[vertex,fill=\colorD,label=left:{$s_\i$}] (s-\i) at (0,-.5*\i) {};
    				\node[vertex,fill=\colorD,label=below:{$t_\i$}] (t-\i) at (.5*\i,-3.5) {};
    				\node (m-\i) at (.5*\i-.5,-.5*\i) {};
    				\node (w-\i) at (.5*\i,-.5*\i-.5) {};
				}
						
				\begin{pgfonlayer}{background}
                    \draw[path] (s-1.center) -- (m-1.center) -- (w-1.center) -- (t-1.center);
                    \draw[path] (s-2.center) -- (m-2.center) -- (w-2.center) -- (t-2.center);
                    \draw[path] (s-5.center) -- (m-5.center) -- (w-5.center) -- (t-5.center);
                    \draw[path] (s-6.center) -- (m-6.center) -- (w-6.center) -- (t-6.center);
                    
					\foreach \i in {1,...,6} \foreach \x in {\i,...,6}
                    {
                        \ifnum \x > \i
                            \xintifboolexpr {\i\x == 12 || \i\x == 14 || \i\x == 15 || \i\x == 16 || \i\x == 23 || \i\x == 25 || \i\x == 26 || \i\x == 34 || \i\x == 36 || \i\x == 45 || \i\x == 56} 
                            {
                                \node[vertex,fill=black,minimum size=2] (a-\i\x) at (0.5*\i,-0.5*\x + \smallerDist) {};
                                \node[vertex,fill=black,minimum size=2] (b-\i\x) at (0.5*\i,-0.5*\x - \smallerDist) {};
                                \node[vertex,fill=black,minimum size=2] (c-\i\x) at (0.5*\i - \smallerDist,-0.5*\x) {};
                                \node[vertex,fill=black,minimum size=2] (d-\i\x) at (0.5*\i + \smallerDist,-0.5*\x) {};
                            }{
                                \node[vertex,fill=black,minimum size=2] (a-\i\x) at (0.5*\i - \smallerDist,-0.5*\x + \smallerDist) {};
                                \node[vertex,fill=black,minimum size=2] (b-\i\x) at (0.5*\i + \smallerDist,-0.5*\x - \smallerDist) {};
                                \node[vertex,fill=black,minimum size=2] (c-\i\x) at (0.5*\i - \smallerDist,-0.5*\x + \smallerDist) {};
                                \node[vertex,fill=black,minimum size=2] (d-\i\x) at (0.5*\i + \smallerDist,-0.5*\x - \smallerDist) {};
                            }
                        \fi
                    }

                    \foreach \i in {1,...,6} \foreach \x in {\i,...,6}
                    {
                        \ifnum \x > \i
                            \draw (a-\i\x) -- (b-\i\x);
                            \draw (c-\i\x) -- (d-\i\x);
                        \fi
                    }

                    \foreach \i in {1,...,5} \foreach \x in {\i,...,6}
                    {
                        \pgfmathtruncatemacro\h{\i+1};
                        \ifnum \x > \h
                            \draw (d-\i\x) -- (c-\h\x) node[midway, vertex,fill=black,minimum size=2] {};
                        \fi
                        \ifnum \x > \i
                            \ifnum \x < 6
                                \pgfmathtruncatemacro\h{\x+1};
                                \draw (b-\i\x) -- (a-\i\h) node[midway, vertex,fill=black,minimum size=2] {};
                            \fi
                        \fi
                    }

                    \foreach \i in {2,...,6} \foreach \z in {1}
                    {
                        \draw (s-\i) -- (c-\z\i) node[midway, vertex,fill=black,minimum size=2] {};
                    }
                    \foreach \i in {1,...,5} \foreach \y in {6}
                    {
                        \draw (b-\i\y) -- (t-\i) node[midway, vertex,fill=black,minimum size=2] {};
                    }
                    \foreach \i in {2,...,5}
                    {
                        \pgfmathtruncatemacro\f{\i-1};
                        \pgfmathtruncatemacro\g{\i+1};
                        \draw (d-\f\i) -- (a-\i\g) node[midway, vertex,fill=black,minimum size=2] {};
                    }
                    \draw (s-1) -- (a-12) node[midway, vertex,fill=black,minimum size=2] {};
                    \draw (d-56) -- (t-6) node[midway, vertex,fill=black,minimum size=2] {};

				\end{pgfonlayer}
			\end{scope}
		\end{tikzpicture}
		\caption{
			An illustration of the reduction from \textsc{Clique} to \mdsp. \\
			\emph{Left side:} Example instance for \textsc{Clique} with a highlighted solution (by thick edges).\\
			\emph{Right side:} The constructed instance with the four shortest paths corresponding to the solution on the left side highlighted. Note that each shortest~$s_i$-$t_i$-path uses exactly two of the diagonal edges.
		}
		\label{fig:hardness-example2}
	\end{figure}%
    Note that the number of vertices and edges in the graph is at most~$3N^2$, where~$N$ is the number of vertices in the instance of \textsc{Clique}.
    Moreover, for each terminal pair~$(s_i,t_i)$, there is exactly one shortest~$s_i$-$t_i$-path (the path that moves horizontally in \cref{fig:hardness-example2} until it reaches the main diagonal, then uses exactly two edges on the diagonal, and finally moves vertically to~$t_i$).
    
    We next prove that for any~$p$, there is a one-to-one correspondence between a set of~$p$ disjoint shortest paths between terminal pairs~$(s_i,t_i)$ in the constructed graph and a clique of size~$p$ in the input graph.
    For the first direction, assume that there is a set~$P$ of disjoint shortest paths between~$p$ terminal pairs.
    Let~$S \subseteq [k]$ be the set of indices such that the paths in~$P$ connect~$s_i$ and~$t_i$ for each~$i \in S$.
    Now consider the set~$K = \{v_{i} \mid i \in S\}$ of vertices in~$G$.
    Clearly~$K$ contains~$p$ vertices.
    It remains to show that~$K$ induces a clique in~$G$.
    To this end, consider any two vertices~$v_{i}, v_{j} \in K$.
    We assume without loss of generality that~$i < j$.
    By assumption, the unique shortest~$s_i$-$t_i$-path and the unique shortest~$s_j$-$t_j$-path are vertex-disjoint.
    By the description of the shortest paths between terminal pairs and the fact that~$s_i$ is higher than~$s_j$ and~$t_i$ is to the left of~$t_j$, it holds that the two considered paths both visit the region that is to the right of~$s_j$ and above~$t_i$. 
    This implies that two edges must be crossing at this position, that is, there are four vertices in the described region and not only two.
    By construction, this means that~$\{v_i,v_j\} \in E$.
    Since the two vertices were chosen arbitrarily, it follows that all vertices in~$K$ are pairwise adjacent, that is,~$K$ induces a clique of size~$p$ in the input graph.

    For the other direction assume that there is a clique~$C = \{v_{i_1},v_{i_2},\ldots,v_{i_p}\}$ of size~$p$ in the input graph.
    We will show that the unique shortest~$s_{i_q}$-$t_{i_q}$-path is vertex-disjoint from the unique shortest~\mbox{$s_{i_r}$-$t_{i_r}$-path} for all~$q \neq r \in [p]$.
    Let~$q,r$ be two arbitrary distinct indices in~$[p]$ and let without loss of generality be~$q < r$.
    Note that the two mentioned paths can only overlap in the region that is to the right of~$s_{i_r}$ and above~$t_{i_q}$.
    Moreover, since by assumption~$v_{i_q}$ and~$v_{i_r}$ are adjacent, it holds by construction that there are four distinct vertices in the described region and the two described paths are indeed vertex-disjoint.
    Thus, we found vertex disjoint paths between~$p$ distinct terminal pairs.
    
    We conclude by analyzing the approximation ratio.
    Note that we technically did not prove a strict reduction for the factor~$m^{1-\varepsilon}$ as the number of vertices in the original instance and the number of edges in the constructed instance are not identical.
    Still, the number~$m$ of edges in the constructed instance is at most~$3N^2$, where~$N$ is the number of vertices in the original instance of \textsc{Clique}.
    Hence, any~$m^{\nicefrac{1}{2}-\varepsilon}$-approximation for \mdsp{} corresponds to a~$(3N^2)^{\nicefrac{1}{2}-\varepsilon} = N^{1-\varepsilon'}$-approximation for \textsc{Clique} for some~$0 < \varepsilon' < 2\varepsilon$ and therefore computing a~$m^{\nicefrac{1}{2}-\varepsilon}$-approximation for \mdsp{} is NP-hard.
\end{proof}

Note that the maximum degree of the constructed instance is again three and all terminal vertices are of degree one.
Thus, \cref{thm:nosqrt} also holds for the edge disjoint version of \mdsp.
However in this case, a very similar result was already known before.
Guruswami et al.~\cite{GKRSY03} showed that computing a~$m^{\nicefrac{1}{2}-\varepsilon}$-approximation is NP-hard for a related problem called \textsc{Length Bounded Edge-Disjoint Paths}.
Their reduction immediately implies the same hardness for \textsc{Maximum Edge-Disjoint Shortest Paths}.
To the best of our knowledge, the best known unparameterized approximation lower bound for \mdsp{} was the $2-\varepsilon$ lower bound due to Chitnis et al.~\cite{CTW24} and we are not aware of any lower bound for \mdp.}

We next show that \cref{thm:nosqrt} is tight, that is, we show how to compute a~$\sqrt{n}$-approximation for \mdsp{} in polynomial time.
We also show that the same algorithm achieves a~$\lceil\sqrt{\ell}\rceil$-approximation.
Note that we can always assume that~$\ell \leq \min(n,m)$ as a set of vertex-disjoint paths is a forest and the number of edges in a forest is less than its number of vertices.
We mention that this algorithm is basically identical to the best known (unparameterized) approximation algorithm for \textsc{Maximum Disjoint Paths}~\cite{Klein96,KS98}.

\begin{theorem}
    \label{prop:lapprox}
    There is a polynomial-time algorithm for \mdsp{} on directed and weighted graphs that achieves an approximation factor of~$\min\{\sqrt{n},\lceil\sqrt{\ell}\rceil\}$.
\end{theorem}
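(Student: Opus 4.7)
The plan is to run a natural greedy that grows $V_\text{used}$: initialize $V_\text{used}=\emptyset$ and $S=\emptyset$; as long as some pair $(s_i,t_i)$ not yet in $S$ admits a path of weight $\dist_G(s_i,t_i)$ in $G$ whose vertices lie in $(V\setminus V_\text{used})\cup\{s_i,t_i\}$, add the pair whose such path uses the fewest edges to $S$ and mark that path's vertices used. Each iteration runs in polynomial time: for every unserved pair $i$ one computes, in the subgraph induced on $(V\setminus V_\text{used})\cup\{s_i,t_i\}$, a minimum-edge path of weight $\dist_G(s_i,t_i)$ via a Dijkstra variant with a lexicographic (weight, edge-count) objective, and then takes the global minimum over the unserved pairs.

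Let $g$ be the greedy output size and let OPT be a fixed optimum with $p$ paths and total edge count $\ell^*\le\ell$. Set $\tau=\lceil\sqrt{\ell}\,\rceil$. Since any feasible solution is a vertex-disjoint union of paths, one can assume $\ell\le n-1$, which forces $\tau\le\sqrt{n}$, so it suffices to prove $p\le g\tau$. The core observation is that when greedy halts every OPT path $Q_i$ intersects $V_\text{used}$: for pairs $i$ served by greedy this is witnessed by the shared endpoints $s_i,t_i$, and for pairs not in greedy it follows from the termination condition together with the fact that $Q_i$ is itself a shortest $s_i$-$t_i$-path in $G$ and must therefore be blocked. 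Vertex-disjointness of OPT then implies $p\le|V_\text{used}|$, reducing the task to bounding $|V_\text{used}|$ by $g\tau$.

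To bound $|V_\text{used}|$ I split into two cases. If every greedy path has at most $\tau$ edges then $|V_\text{used}|\le g(\tau+1)$; I would shave the apparent $+1$ via a charging argument that exploits OPT's vertex-disjointness, namely that every pair $j\in\text{OPT}\cap\text{greedy}$ contributes \emph{both} endpoints $s_j,t_j$ to $V(Q_j)\cap V_\text{used}$, giving $\sum_{i\in\text{OPT}}|V(Q_i)\cap V_\text{used}|\ge p+|\text{OPT}\cap\text{greedy}|$, combined with a greedy-exchange step ensuring enough greedy picks are in OPT to recover $p\le g\tau$. If instead some greedy path uses more than $\tau$ edges, let $g'$ be the number of ``short'' greedy paths (those with $\le\tau$ edges). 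At the moment greedy first picks a long path, no unserved pair admits an available shortest path with $\le\tau$ edges, so every OPT pair outside the first $g'$ greedy picks is either \emph{long} ($|Q_i|>\tau$, of which there are at most $\lfloor\ell/(\tau+1)\rfloor\le\tau-1$ since $\tau^2\ge\ell$) or blocked by the at most $g'(\tau+1)$ vertices used so far. This yields $p\le g'(\tau+1)+(\tau-1)\le g\tau$ after substituting $g'\le g-1$.

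The main obstacle is the tight constant in the first case: the naive vertex-count only yields $g(\tau+1)$ and hence an approximation factor of $\tau+1$, and closing the gap to the claimed $\lceil\sqrt{\ell}\,\rceil$ requires the refined accounting of endpoint versus internal vertices of greedy paths together with the exchange argument sketched above. Everything else (the polynomial-time implementation, the ``every OPT path meets $V_\text{used}$'' observation, and Case 2) is routine counting using only $\tau^2\ge\ell\ge\ell^*$ and the greedy invariant that the picked path at each step minimizes the edge count among all currently available shortest paths.
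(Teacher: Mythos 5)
Your greedy algorithm and the overall shape of the analysis (split between a ``short-pick'' phase and the first ``long pick'') match the paper's proof. The gap is genuine, though, and is caused by where you place the threshold. You classify a greedy pick as \emph{short} when its path has at most $\tau=\lceil\sqrt{\ell}\,\rceil$ \emph{edges}, hence up to $\tau+1$ vertices, which is what creates the $+1$ you then try to argue away. The charging/exchange argument you gesture at is never actually supplied, and your Case~2 arithmetic does not close: substituting $g'\le g-1$ into $p\le g'(\tau+1)+(\tau-1)$ gives $p\le g\tau+g-2$, which exceeds $g\tau$ whenever $g\ge 3$, so the claimed chain of inequalities is false as written.

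The fix is simply to move the threshold by one, as the paper does. Call a greedy pick short when the chosen path has at most $\tau$ \emph{vertices} (i.e., $\ell_j+1\le\tau$, so at most $\tau-1$ arcs). Then each short pick blocks at most $\tau$ pairs of the optimum, so if all picks are short you get $p\le g\tau$ directly, with no $+1$ to shave and no charging or exchange argument needed. And at the moment of the first long pick, every surviving optimum path has at least $\ell_j\ge\tau$ arcs, hence there are at most $\ell/\tau\le\tau$ of them; since the $g'$ earlier short picks removed at most $g'\tau$ optimum pairs, $p\le g'\tau+\tau=(g'+1)\tau\le g\tau$. Everything else in your write-up---the Dijkstra variant with a lexicographic $(\text{weight},\text{arc-count})$ objective, the observation that every optimum path must intersect $V_{\text{used}}$ at termination, and the reduction from $\sqrt{n}$ to $\lceil\sqrt{\ell}\,\rceil$ via $\ell\le n-1$---is sound and mirrors the published argument.
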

\begin{proof}
    Let~$\opt$ be a maximum subset of terminal pairs that can be connected by shortest pairwise vertex-disjoint paths and let~$j$ be the index of a terminal pair~$(s_j,t_j)$ such that a shortest~$(s_j,t_j)$-path contains a minimum number of arcs.
    We can compute the index~$j$ as well as a shortest~$s_j$-$t_j$-path with a minimum number of arcs by running a folklore modification of Dijkstra's algorithm from each terminal vertex~$s_i$.\footnote{The standard Dijkstra's algorithm is modified by assigning to each vertex a pair of labels: the distance from the terminal and the number of arcs in the corresponding path; then the pairs of labels are compared lexicographically.}
    Let~$\ell_j$ be the number of arcs in the found path.
    Our algorithm iteratively picks the shortest~$s_j$-$t_j$-path using~$\ell_j$ arcs, removes all involved vertices from the graph, recomputes the distance between all terminal pairs, removes all terminal pairs whose distance increased, updates the index~$j$, and recomputes~$\ell_j$.
    We distinguish whether~$\ell_j + 1 \leq \min(\sqrt{n},\lceil\sqrt{\ell}\rceil)$ or not.
    
    While~$\ell_j + 1 \leq \min(\sqrt{n},\lceil\sqrt{\ell}\rceil)$, note that we removed at most~$\ell_j+1$ terminals pairs in~$\opt$.
    Hence, if~$\ell_j + 1 \leq \min(\sqrt{n},\lceil\sqrt{\ell}\rceil)$ holds at every stage, then we connected at least~$\nicefrac{|\opt|}{\min(\sqrt{n},\lceil\sqrt{\ell}\rceil)}$ terminal pairs, that is, we found a~$\min(\sqrt{n},\lceil\sqrt{\ell}\rceil)$-approximation.
    
    So assume that at some point~$\ell_j + 1 > \min(\sqrt{n},\lceil\sqrt{\ell}\rceil)$ and let~$x$ be the number of terminal pairs that we already connected by disjoint shortest paths.
    By the argument above, we have removed at most~$x \cdot \min(\sqrt{n},\lceil\sqrt{\ell}\rceil)$ terminal pairs from~$\opt$ thus far.
    We now make a case distinction whether or not~$\sqrt{n} \leq \lceil\sqrt{\ell}\rceil$.
    If~$\ell_j + 1 > \lceil\sqrt{\ell}\rceil \geq \sqrt{n}$, then we note that all remaining paths in~$\opt$ contain at least~$\sqrt{n}$ vertices each and since the paths are vertex-disjoint, there can be at most~$\sqrt{n}$ paths left in~$\opt$.
    Hence, we can infer that~${|\opt| \leq (x+1) \cdot \sqrt{n}}$.
    Consequently, even though we might remove all remaining terminal pairs in~$\opt$ by connecting~$s_j$ and~$t_j$, this is still a~$\sqrt{n}$-approximation (and a~$\lceil\sqrt{\ell}\rceil$-approximation as we assumed~$\lceil\sqrt{\ell}\rceil \geq \sqrt{n}$).
    
    If~$\ell_j + 1> \sqrt{n} \geq \lceil\sqrt{\ell}\rceil$, then we note that all remaining paths in~$\opt$ contain at least~$\ell_j > \lceil\sqrt{\ell}\rceil-1$ edges each.
    Moreover, since~$\ell_j$ and~$\lceil\sqrt{\ell}\rceil$ are integers, each path contains at least~$\lceil\sqrt{\ell}\rceil$ edges each.
    Since all paths in~$\opt$ contain by definition at most~$\ell$ edges combined, the number of paths in~$\opt$ is at most~$\nicefrac{\ell}{\lceil\sqrt{\ell}\rceil} \leq \lceil\sqrt{\ell}\rceil$.
    Hence, we can infer in that case that~$|\opt| \leq (x+1) \cdot \lceil\sqrt{\ell}\rceil$.
    Again, even if we remove all remaining terminal pairs in~$\opt$ by connecting~$s_j$ and~$t_j$, this is still a~$\lceil\sqrt{\ell}\rceil$-approximation (and a~$\sqrt{n}$-approximation as we assumed~$\sqrt{n} \geq \lceil\sqrt{\ell}\rceil$).
    This concludes the proof.
\end{proof}

\section{Exact Algorithms}
\label{sec:fpt}
In this section, we present a~$2^{O(\ell)}\poly(n)$-time algorithm for \mdsp, that is, we show that the problem is fixed-parameter tractable when parameterized by~$\ell$.
On the negative side, we show that even the special case \dsp{} does not admit a polynomial kernel when parameterized by~$\ell$ and cannot be solved in~$2^{o(n+m)}$ unless the ETH fails.
Note that since~$\ell \leq \min(n,m)$, this also excludes~$2^{o(\ell)}\poly(n)$-time algorithms under the ETH.
We mention that our reduction also excludes~$2^{o(\sqrt{n})}$-time algorithms for \dsp{} on planar input graphs.

The algorithmic result uses the technique of \emph{color coding} of Alon, Yuster, and Zwick~\cite{AYZ95}.
Imagine we are searching for some structure of size~$k$ in a graph.
The idea of color coding is to color the vertices (or edges) of the input graph with a set of~$k$~colors and then only search for colorful solutions, that is, structures in which all vertices have distinct colors.
Of course, this might not yield an optimal solution, but by trying enough different random colorings, one can often get a constant error probability in~$f(k)\poly(n)$ time.
Using the following result by Naor, Schulman, and Srinivasan~\cite{NSS95}, this can also be turned into a deterministic algorithm showing that the problem is fixed-parameter tractable.
The result states that for any~$n,k \geq 1$, one can construct an~$(n,k)$-perfect hash family of size~$e^kk^{O(\log k)}\log n$ in~$e^kk^{O(\log k)}n\log n$ time.
An~$(n,k)$-perfect hash family~$\mathcal{F}$ is a family of functions from~$[n]$ to~$[k]$ such that for every set~$S \subseteq [n]$ with~$|S| \leq k$, there exists a function~$f \in \mathcal{F}$ such that~$f$ colors all vertices in~$S$ with distinct colors.

\begin{restatable}{theorem}{fptl} 
    \label{prop:fptl}
    \mdsp{} on weighted and directed graphs can be solved in~$2^{O(\ell)}\poly(n)$~time.
\end{restatable}

{

\begin{proof}
  Let~$(G,w,(s_1,t_1),\dots,(s_k,t_k),p)$ be an instance of \mdsp.
  First, we guess the value of~$\ell$ by starting with~$\ell=p$ and increasing the value of~$\ell$ by one whenever we cannot find a solution with at least~$p$ shortest paths and at most~$\ell$ edges.
  We start with~$\ell = p$ as any set of~$p$ disjoint paths contains at least~$\ell$ edges.
  Notice that the total number of vertices in a (potential) solution with $p$ paths is at most~$\ell+p$.
  We use the color-coding technique of Alon, Yuster, and Zwick~\cite{AYZ95}.
  We color the vertices of~$G$ uniformly at random using~$p+\ell$~colors (the set of colors is~$[\ell+p]$) and observe that the probability that all the vertices in the paths in a solution have distinct colors is at least~$\frac{(p+\ell)!}{(p+\ell)^{(p+\ell)}}\geq e^{-(p+\ell)}$. 
  We say that a solution to the considered instance is \emph{colorful} if distinct paths in the solution have no vertices of the same color.
  Note that we do not require that the vertices within a path in the solution are colored by distinct/equal colors.
  The crucial observations are that any colorful solution is a solution and the probability of the existence of a colorful solution for a yes-instance of \mdsp{} is at least~$e^{-(p+\ell)}$ as any solution in which all vertices receive distinct colors is a colorful solution. 
  
  We use dynamic programming over subsets of colors to find a colorful solution. More precisely, we find the minimum number of arcs in a
  collection~${\mathcal{C}=\{P_i\}_{i\in S}}$ of $p$ pairwise vertex-disjoint paths for some $S\subseteq [k]$ satisfying the conditions: 
  (i) for each~$i \in S$, the path~$P_i$ is a shortest path from~$s_i$ to~$t_i$ and (ii) there are no vertices of distinct paths of the same color.

  For a subset~$X\subseteq [p+\ell]$ of colors and a positive integer~$r\leq p$, we denote by~$f[X,r]$ the minimum total number of arcs in~$r$~shortest paths connecting distinct terminal pairs such that the paths contains only vertices of colors in~$X$ and there are no vertices of distinct paths of the same color.
  We set~${f[X,r]= \infty}$ if such a collection of~$r$~paths does not exist.

  To compute~$f$, if~$r=1$, then let~$W\subseteq V$ be the subset of vertices colored by the colors in~$X$.
  We use Dijkstra's algorithm to find the set~$I\subseteq[k]$ of all indices~$i\in[k]$ such that the lengths of the shortest~$s_i$-$t_i$-paths in $G$ and $G[W]$ are the same.
  If $I=\emptyset$, then we set~$f[X,1]=\infty$.
  Assume that this is not the case.
  Then, we use the variant of Dijkstra's algorithm mentioned in \Cref{prop:lapprox} to find the index~$i\in I$ and a shortest $s_i$-$t_i$-path~$P$ in~$G[W]$ with a minimum number of arcs.
  Finally, we set~$f[X,1]$ to be equal to the number of arcs in~$P$.

  For $r\geq 2$, we compute $f[X,r]$ for each~$X\subseteq [p+\ell]$ using the recurrence relation 
  \begin{equation}\label{eq:rec}
  f[X,r]=\min_{Y\subset X}\{f[X\setminus Y,r-1]+f[Y,1]\}.
  \end{equation}

  The correctness of computing the values of $f[X,1]$ follows from the description and the correctness of recurrence~(\ref{eq:rec}) follows from the condition that distinct paths should not have vertices of the same color (including their ends).

  We compute the values~$f[X,r]$ in order of increasing~$r \in [p]$.
  Since computing~$f[Y,1]$ for a given set~$Y$ of colors can be done in polynomial time, we can compute all values in overall~$3^{p+\ell}\poly(n)$~time.
  Once all values~$f[X,r]$ are computed, we observe that a colorful solution exists if and only if~$f[S,p]\leq \ell$.

  If there is a colorful solution, then we conclude that $(G,w,(s_1,t_1),\dots,(s_k,t_k),p)$ is a yes-instance of \mdsp.  
  Otherwise, we discard the considered coloring and try another random coloring and iterate.
  If we fail to find a solution after executing $N=\lceil e^{p+\ell}\rceil$ iterations, we obtain that the probability that~$(G,w,(s_1,t_1),\dots,(s_k,t_k),p)$ is a yes-instance is at most~$(1-\frac{1}{e^{p+\ell}})^{e^{p+\ell}}\leq e^{-1}$. Thus, we return that~$(G,w,(s_1,t_1),\dots,(s_k,t_k),p)$ is a no-instance with the error probability upper bounded by $e^{-1}<1$. 
  Since the running time in each iteration is~$3^{p+\ell}\poly(n)$ and~$p\leq \ell$, the total running time is in~$2^{O(\ell)}\poly(n)$.
  Note that we do the color coding and dynamic programming for each value between~$p$ and the actual value~$\ell$.
  However, this only adds an additional factor of~$\ell \leq n$ which disappears in the~$\poly(n)$.
  
  The above algorithm can be derandomized using the results of Naor, Schulman, and Srinivasan~\cite{NSS95} by replacing random colorings by prefect hash families. We refer to the textbook by Cygan et al.~\cite{CyganFKLMPPS15} for details on this common technique.
\end{proof}
}

The fixed-parameter tractability result of \cref{prop:fptl} immediately raises two questions: Can the running time be significantly improved and does \mdsp{} parameterized by~$\ell$ admit a polynomial kernel?
We next show that the answer to both questions is no.
We start with an ETH-based lower bound.
Recall that \dsp{} is the special case of \mdsp{} where the input graph is unweighted and all paths in the solution need to be shortest paths between the respective ends.
Before we give the proof, we first show a intermediate result that will allow us to extend our hardness result to planar graphs.
The intermediate result is achieved by combining a number of known reductions.
Since we could not find this combination of results in the existing literature, we give a proof sketch for the sake of completeness.
We start by defining a few relevant versions of \textsc{Satisfiability}.

For a formula~$\phi$ with variables~$x_1,x_2,\ldots,x_n$ and clauses~$C_1,C_2,\ldots,C_m$, we say that the \emph{variable-clause graph associated with~$\phi$} has one vertex~$v_i$ for each variable~$x_i$, one vertex~$u_j$ for each clause~$C_j$, and an edge between~$v_i$ and~$u_j$ if and only if~$x_i$ or~$\neg x_i$ appear in~$C_j$.
\textsc{Planar 3-Sat} is the problem of deciding whether a formula~$\phi$ of \textsc{3-Sat} with a planar associated graph is satisfiable.
An instance of \textsc{Sat} is \emph{monotone}, if each clause is monotone, that is, it either contains only positive literals or only negative literals.
Finally, \textsc{Rectilinear 3-Sat} is a special case of \textsc{Planar 3-Sat} in which we additionally require that adding a variable cycle still results in a planar associated graph.
A \emph{variable cycle} is an additional set of edges that induces a cycle through all vertices~$v_i$.
More specifically in \textsc{Rectilinear 3-Sat}, one is given a planar embedding of the graph~$G_\phi$ associated with~$\phi$ where all vertices are drawn as rectangles, all edges are vertical segments, and all rectangles corresponding to variables are drawn on a horizontal line and no rectangles corresponding to a clause intersects this line.
In \textsc{Rectilinear Monotone 3-Sat}, all rectangles corresponding to positive clauses are drawn above the variable rectangles and all rectangles corresponding to negative clauses are drawn below the variable rectangles.
An example of \textsc{Rectilinear Monotone 3-Sat} is given in \cref{fig:examplepm3}.  
\begin{figure}[t]
    \centering
    \begin{tikzpicture}
        \node[rectangle,draw,minimum height=.5cm, minimum width=1cm] at(0,0) (x1) {$v_1$};
        \node[rectangle,draw,minimum height=.5cm, minimum width=1cm] at(2,0) (x2) {$v_2$};
        \node[rectangle,draw,minimum height=.5cm, minimum width=1cm] at(4,0) (x3) {$v_3$};
        \node[rectangle,draw,minimum height=.5cm, minimum width=1cm] at(6,0) (x4) {$v_4$};
        \node[rectangle,draw,minimum height=.5cm, minimum width=1cm] at(8,0) (x5) {$v_5$};
        \node[rectangle,draw,minimum height=.5cm, minimum width=4cm] at(2,1) (C1) {$u_1$};
        \node[rectangle,draw,minimum height=.5cm, minimum width=5.5cm] at(3,-1) (C2) {$u_2$};
        \node[rectangle,draw,minimum height=.5cm, minimum width=8cm] at(4,-2) (C3) {$u_3$};

        \draw (.25,.25) to (.25,.75);
        \draw (2,.25) to (2,.75);
        \draw (3.725,.25) to (3.725,.75);
        
        \draw (.375,-.25) to (.375,-.75);
        \draw (4,-.25) to (4,-.75);
        \draw (5.75,-.25) to (5.75,-.75);
        
        \draw (.125,-.25) to (.125,-1.75);
        \draw (6,-.25) to (6,-1.75);
        \draw (8,-.25) to (8,-1.75);
    \end{tikzpicture}
    \captionsetup{singlelinecheck=off}
    \caption{The graph~$G_\phi$ associated with the formula 
    {\[\phi = (x_1 \lor x_2 \lor x_3) \land (\neg x_1 \lor \neg x_3 \lor \neg x_4) \land (\neg x_1 \lor \neg x_4 \lor \neg x_5).\]}}
    \label{fig:examplepm3}
\end{figure}
We are now in a position to state the intermediate result.

\begin{restatable}[folklore]{proposition}{SAT}
    \label{prop:SAT}
    Assuming the ETH, \textsc{Rectilinear Monotone 3-Sat} cannot be solved in~$2^{o(\sqrt{n+m})}$ time even if each variable appears in at most~$8$ clauses.
\end{restatable}

\begin{proof}[Proof sketch]
    We start from \textsc{3-Sat} cannot be solved in~$2^{o(n+m)}$ time unless the ETH fails~\cite{IP01,IPZ01}.
    Using a reduction by Tovey~\cite{Tov84}, we reduce \textsc{3-Sat} in polynomial time to an equivalent instance of~\textsc{3-Sat} in which each variable appears in at most~$4$ clauses.
    The number of variables and clauses grows only by a linear factor in this reduction and therefore this version of \textsc{3-Sat} can also not be solved in~$2^{o(n+m)}$ time unless the ETH fails.
    Next, we use a polynomial-time reduction due to Lichtenstein~\cite{Lic82} to transform the previous instance into an instance of planar \textsc{3-Sat} where each variable appears in at most~$7$ clauses.
    Moreover, the reduction also outputs an ordering of the variables~$x_1,x_2,\ldots,x_n$ such that adding the edges in~${\{\{x_i,x_i+1\} \mid i < n\} \cup \{x_n,x_1\}}$ to the variable-clause graph associated with the input formula still results in a planar graph.
    The number of clauses increases by a quadratic factor and hence we get an ETH-based lower bound of~$2^{o(\sqrt{n+m})}$.
    It was observed by Knuth and Raghunathan~\cite{KR92} that the instance produced by the reduction by Lichtenstein is an instance of \textsc{Rectilinear 3-Sat}.
    Finally, using a reduction due to de Berg and Khosravi~\cite{dBK12}, we turn the instance of \textsc{Rectilinear 3-Sat} where each variable appears in at most~$7$ clauses to an instance of \textsc{Rectilinear Monotone 3-Sat} where each variable appears in at most~$8$ clauses.
    In this reduction, the number of variables and clauses again increases by a linear factor and hence we get a~$2^{o(\sqrt{n+m})}$-time ETH-based lower bound.
\end{proof}

We next show that \dsp{} cannot be solved in~$2^{o(n+m)}$ time and, on planar graphs, it cannot be solved in~$2^{o(n+m)} = 2^{o(n)}$ time.
Note that this also excludes~$2^{o(\ell)}\poly(n)$-time and~$2^{o(\sqrt{\ell})}\poly(n)$-time algorithms for \dsp{} (on planar graphs) since~$\ell < n$ and~$\ell \leq m$.
Since \dsp{} is a special case of \mdsp, we also get the same lower bounds for the latter.

\begin{restatable}{proposition}{mainETH}
\label{thm:main-ETH}
    \dsp{} cannot be solved in~$2^{o(\sqrt{n+m})}$ time unless the ETH fails.
    Under the same assumption, it cannot be solved in~$2^{o(\sqrt{n})}$ time on planar graphs.
\end{restatable}

{
\begin{proof}
    For general graphs, we reduce from \textsc{3-Sat}, which cannot be solved in~$2^{o(n+m)}$ time unless the ETH fails~\cite{IP01,IPZ01}.
    For planar graphs, we reduce from \textsc{Rectilinear Monotone 3-Sat} with each variable appearing in at most~$8$ clauses.
    This problem cannot be solved in~$2^{o(\sqrt{n+m})}$ time unless the ETH fails by \cref{prop:SAT}.
    Otherwise, the reductions are the same except for the fact that we ignore the embedding for general graphs.
    To avoid repetition, we will present the reduction only for planar graphs.
    We will have a terminal pair~$(s_i,t_i)$ for each variable~$x_i$ and at most two terminal pairs~$(s'_j,t'_j)$ and~$(s^*_j,t^*_j)$ for each clause~$C_j$ in the input formula~~$\phi$.
    First, we replace each of the rectangular vertices representing a variable~$x_i$ as follows.
    For each edge incident to the rectangle, we place one vertex on the boundary (at the place of the intersection).
    We then place~$s_i$ on the left boundary,~$t_i$ on the right boundary, and additional dummy vertices on either the upper or lower boundary until both boundaries have the same number of vertices (at most~$8$).
    We connect all vertices on the boundary by a cycle that follows the boundary.
    We say that the path from~$s_i$ to~$t_i$ via vertices on the upper boundary is the upper paths and the path consisting of vertices on the lower boundary as the lower path.
    An example for~$v_1$ in \cref{fig:examplepm3} is given in \cref{fig:vg}.
    \begin{figure}
        \centering
        \begin{tikzpicture}
            \node[circle,draw,label=left:$s_1$] at(0,0) (s) {};
            \node[circle,draw] at(2,1) (u1) {};
            \node[circle,draw] at(3,1) (u2) {};
            \node[circle,draw] at(1,-1) (l1) {};
            \node[circle,draw] at(4,-1) (l2) {};
            \node[circle,draw,label=right:$t_1$] at(5,0) (t) {};

            \draw (s) to (0,1);
            \draw (s) to (0,-1);
            \draw (0,1) to (u1);
            \draw (0,-1) to (l1);
            \draw (u1) to (u2);
            \draw (l1) to (l2);
            \draw (u2) to (5,1);
            \draw (l2) to (5,-1);
            \draw (5,1) to (t);
            \draw (5,-1) to (t);
        \end{tikzpicture}
        \caption{The gadget constructed for vertex~$v_1$ in \cref{fig:examplepm3}. One of the two highest vertices is a dummy vertex.}
        \label{fig:vg}
    \end{figure}

    Next, we replace the vertices representing clauses.
    We distinguish between clauses with one, two, and three literals in them (and we assume that each variable appears at most once in each clause).
    For a clause~$C_j$ with one or two literals, the gadget simply consists of two terminal vertices~$s'_j$ and~$t'_j$ which are connected to their respective neighbors in the clause gadgets (the vertices constructed on the boundary of the variable rectangle because of an edge to clause~$C_j$) by paths of length~$5$ (which can either be implemented using edge weights or by subdividing an edge 4 times).
    If a clause~$C_j$ contains three literals, then we construct the gadget displayed in \cref{fig:cg}.
    \begin{figure}
        \centering
        \begin{tikzpicture}
            \node[circle,draw,label=below:$a$] at(0,0) (a) {};
            \node[circle,draw,label=below:$b$] at(3,0) (b) {};
            \node[circle,draw,label=below:$c$] at(6,0) (c) {};
            \node[circle,draw] at(3,4) (v) {};
            \node[circle,draw] at(1,2) (w) {};
            \node[circle,draw,label=below:$v$] at(4.5,2) (x) {};
            \node[circle,draw] at(3.5,2.5) (y) {};
            \node[circle,draw] at(5.5,2.5) (z) {};
            \node[circle,draw,label=$s'_j$] at(2,2) (s1) {};
            \node[circle,draw,label=left:$s^*_j$] at(4.5,3.5) (s2) {};
            \node[circle,draw,label=right:$t'_j$] at(6,4) (t1) {};
            \node[circle,draw,label=$t^*_j$] at(4.5,2.5) (t2) {};

            \draw (a)+(-.12,.12) to (-.12,4);
            \draw (a)+(.12,.12) to (0.12,2);
            \draw (.12,2) to (w);
            \draw (-.12,4) to (v);
            \draw (v) to (t1);
            \draw[very thick] (t1)+(.12,-.12) to (6.12,2);
            \draw[very thick] (c)+(.12,.12) to (6.12,2);
            \draw (w) to (s1);
            \draw (s1) to (2.9,2);
            \draw (b)+(-.12,.12) to (2.9,2);
            \draw (b)+(.12,.12) to (3.12,2);
            \draw (3.12,2) to (x);
            \draw[very thick] (c)+(-.12,.12) to (5.9,2);
            \draw[very thick] (5.9,2) to (x);
            \draw[very thick] (y) to (x);
            \draw[very thick] (y) to (s1);
            \draw (y) to (s2);
            \draw (y) to (t2);
            \draw (z) to (x);
            \draw (z) to (t1);
            \draw[very thick] (z) to (s2);
            \draw[very thick] (z) to (t2);

            \node[rectangle,dotted,draw, minimum width=8cm, minimum height=3cm] at (3,3) {};
        \end{tikzpicture}
        \caption{The gadget constructed for a clause with three literals. The vertices~$a,b,$ and~$c$ are the three corresponding neighbors in variable gadgets of vertices in the clause gadget. Each edge represents a path of length~$5$ and the thick edges show the shortest paths between the terminals that are used if vertex~$c$ is free (not used in a solution path in the corresponding variable gadget).}
        \label{fig:cg}
    \end{figure}    
    
    Note that the reduction can be computed in polynomial time and contains~$O(m)$ vertices as each gadget has constant size and the number of variables in any \textsc{3-Sat}-instance is at most~$3m$.
    Moreover, if we start with an instance of \textsc{Rectilinear Monotone 3-Sat}, then the constructed instance is planar as each gadget is planar and the edges between gadgets can be drawn arbitrarily close to the respective edges in~$G_\phi$.
    Hence, it only remains to prove that the constructed instance is a yes-instance if and only if~$\phi$ is satisfiable.
    To this end, first assume that~$\phi$ is satisfiable.
    Then, there exists some satisfying assignment~$\beta$ for~$\phi$.
    For each variable~$x_i$, if~$\beta$ assigns true to~$x_i$, then we connect~$s_i$ to~$t_i$ via the lower path.
    Otherwise, we connect~$s_i$ to~$t_i$ via the upper path.
    Note that both of these paths have the same length and these paths are indeed shortest paths between~$s_i$ and~$t_i$ as they have length at most~$9$ and leaving the gadget for~$v_i$ results in a path of length at least~$10$ as each path between a vertex in a variable gadget and a vertex in a clause gadget has length~$5$.
    Consider now the gadget for any clause~$C_j$.
    Since~$\beta$ is a satisfying assignment for~$\phi$, there is a variable satisfying~$C_j$.
    Hence by construction, the respective neighbor of a vertex in the gadget is not used to connect the corresponding terminals in the variable gadget.
    If~$C_j$ contains one or two literals, then there is a shortest path for the two terminal vertices in the clause gadget that can be used to connect them.
    If~$C_j$ contains three literals, then there are also vertex-disjoint shortest paths to connect both terminal pairs as shown next.
    Consider the gadget in \cref{fig:cg} and first assume that the vertex~$a$ is free (not used to connect the two terminals in the corresponding variable gadget).
    Then, we can connect~$s'_j$ to~$t'_j$ using the unique shortest path going through~$a$ and connect~$s^*_j$ to~$t^*_j$ via either of the two shortest paths.
    Next, assume that the vertex~$b$ is free.
    Then, we connect~$s'_j$ to~$t'_j$ by using the two paths incident to vertex~$b$ and then finish the path by the unique shortest path between~$v$ and~$t'_j$ that stays within the clause gadget.
    We connect~$s^*_j$ to~$t^*_j$ by the left shortest path.
    Lastly, assume that~$c$ is free.
    Then, we connect~$s'_j$ to~$t'_j$ by using the unique shortest path from~$s'_j$ to~$v$ that stays within the clause gadget and then the two paths incident to vertex~$c$.
    We connect~$s^*_j$ to~$t^*_j$ by the right shortest path.
    These two paths are also highlighted in \cref{fig:cg}.
    Note that all described paths are indeed shortest paths as the three vertices~$a$,~$b$ and~$c$ correspond to different variables and hence they have pairwise distance at least~$10$.

    For the reverse direction, assume that there is a solution to the constructed instance of \dsp.
    We first show that each clause uses at least one vertex from a vertex gadget.
    This statement is trivially true for clauses with one or two literals.
    Now consider a terminal pair~$(s'_j,t'_j)$ in a clause gadget for a clause with three literals.
    Note that all paths between~$s'_j$ and~$t'_j$ that do not leave the clause gadget contain both neighbors of~$s^*_j$.
    Thus, if the shortest~\mbox{$s'_j$-$t'_j$-path} does not leave the clause gadget, then there is no disjoint~$s^*_j$-$t^*_j$-path.

    As shown above, there are only two shortest paths between the two terminals in a variable gadget.
    Thus, for each variable gadget, the solution picks for each variable either the upper or the lower path in the variable gadget to connect the terminals in such a way that for each clause gadget at least one of the three neighbors in variable gadgets is free.
    Let~$\beta$ be the assignment that assigns true to~$x_i$ if and only if the solution connects~$s_i$ to~$t_i$ via the lower path.
    By construction this assignment satisfies each clause and therefore confirms that~$\phi$ is satisfiable.
    This concludes the proof.
\end{proof}
}

We next exclude a polynomial kernel for \dsp{} parameterized by~$\ell$.
To show that a parameterized problem~$P$ does (presumably) not admit a polynomial kernel, one can use the framework of \emph{cross-compositions}.
Given an NP-hard problem~$L$, a polynomial equivalence relation~$R$ on the instances of~$L$ is an equivalence relation such that (i) one can decide for any two instances in polynomial time whether they belong to the same equivalence class, and (ii) for any finite set~$S$ of instances,~$R$ partitions the set into at most~$\max_{I \in S} \poly(|I|)$ equivalence classes.
Given an NP-hard problem~$L$, a parameterized problem~$P$, and a polynomial equivalence relation~$R$ on the instances of~$L$, an OR-cross-composition of~$L$ into~$P$ (with respect to~$R$) is an algorithm that
takes~$q$~instances~$I_1,I_2,\ldots,I_q$ of~$L$ belonging to the same equivalence class of~$R$ and constructs in~$\poly(\sum_{i=1}^{q} |I_i|)$~time an instance~$(I,\rho)$ of~$P$ such that (i) $\rho$ is polynomially upper-bounded by~$\max_{i \in [q]} |I_i| + \log q$, and (ii) $(I,\rho)$ is a yes-instance of~$P$ if and only if at least one of the instances~$I_i$ is a yes-instance of~$L$.
If a parameterized problem admits an OR-cross-composition, then it does not admit a polynomial kernel unless NP~$\subseteq$ coNP/poly~\cite{BJK14}.

In order to exclude a polynomial kernel, we first show that a special case of \mdsp{} remains NP-hard.
We call this special case \ldsp{} and it is the special case of \dsp{} where all edges have weight one and the input graph is layered, that is, there is a partition of the vertices into (disjoint) sets~$V_1,V_2,\ldots,V_{\lambda}$ such that all edges~$\{u,v\}$ are between two consecutive layers, that is~$u \in V_i$ and~${v \in V_{i+1}}$ or~$u \in V_{i+1}$ and~$v \in V_{i}$ for some~$i \in [\lambda-1]$.
Moreover, each terminal pair~$(s_i,t_i)$ satisfies that~$s_i \in V_1$, $t_i \in V_\lambda$, and each shortest path between the two terminals is \emph{monotone}, that is, it contains exactly one vertex of each layer.
\ldsp{} is formally defined as follows.

\defproblem{\ldsp}
{A~$\lambda$-layered graph~$G=(V,E)$ with a $\lambda$-partition $\{V_1,V_2,\ldots,V_\lambda\}$ of the vertex set,
terminal pairs~$(s_1,t_1), (s_2,t_2), \ldots, (s_k,t_k)$ with~$s_i \in V_1$, $t_i \in V_{\lambda}$, and~$\dist(s_i,t_i) = \lambda-1$ for all~$i \in [\lambda]$.}
{Is there a collection~$\mathcal{C}=\{P_i\}_{i\in [k]}$ of pairwise vertex-disjoint paths such that~$P_i$ is an~$s_i$-$t_i$-path of length~$\lambda-1$ for all~$i \in [k]$?}
It is quite straight-forward to prove that \ldsp{} is NP-complete as shown next.

\begin{restatable}{proposition}{lnp} 
    \label{prop:lnp}
    \ldsp{} is NP-complete.
\end{restatable}

{

\begin{proof}
    We focus on the NP-hardness as~\ldsp{} is a special case of \dsp{} and therefore clearly in NP.
    We reduce from \textsc{3-Sat}.
    The main part of the reduction is a selection gadget.
    The gadget consists of a set~$U$ of~$n+1$ vertices~$u_0,u_1,\ldots,u_n$ and between each pair of consecutive vertices~$u_{i-1},u_i$, there are two paths with~$m$ internal vertices each.
    Let the set of vertices be~$V_i = \{v_1^i,v_2^i,\ldots,v_m^i\}$ and~$W_i = \{w_1^i,w_2^i,\ldots,w_m^i\}$.
    The set of edges in the selection gadget is
    \begin{align*}
        E = \{\{u_{i-1},v_1^i\},\{u_{i-1},w_1^i\}, \{v_{m}^i,u_i\},\{w_m^{i},u_i\} &\mid i \in [n]\} \\ \cup\,\{\{v_j^i,v_{j+1}^i\}, \{w_j^i,w_{j+1}^i\}&\mid i \in [n] \land j \in [m-1]\}.
    \end{align*}
    The constructed instance will have~$m+1$ terminal pairs and is depicted in \cref{fig:layered}.
    \begin{figure}[t]
		\centering
		\begin{tikzpicture}
            \node[circle,draw,label=$s_3$] (s0) at (0,4.5) {};
            \node[circle,draw] (v1) at (-.5,4) {} edge(s0);
            \node[circle,draw] (v2) at (-.5,3.5) {} edge(v1);
            \node[circle,draw] (w1) at (.5,4) {} edge(s0);
            \node[circle,draw] (w2) at (.5,3.5) {} edge(w1);
            \node[circle,draw,label=left:$u_1$] (u1) at (0,3) {} edge(v2) edge(w2);
            \node[circle,draw] (v3) at (-.5,2.5) {} edge(u1);
            \node[circle,draw] (v4) at (-.5,2) {} edge(v3);
            \node[circle,draw] (w3) at (.5,2.5) {} edge(u1);
            \node[circle,draw] (w4) at (.5,2) {} edge(w3);
            \node[circle,draw,label=left:$u_2$] (u2) at (0,1.5) {} edge(v4) edge(w4);
            \node[circle,draw] (v5) at (-.5,1) {} edge(u2);
            \node[circle,draw] (v6) at (-.5,.5) {} edge(v5);
            \node[circle,draw] (w5) at (.5,1) {} edge(u2);
            \node[circle,draw] (w6) at (.5,.5) {} edge(w5);
            \node[circle,draw,label=below:$t_3$] at (0,0) {} edge(v6) edge(w6);

            \node[circle,draw,label=$s_1$] (s1) at (-2.5,4.5) {} edge(w1);

            \node[circle,draw] (a2) at (-3,3.5) {} edge(w1);
            \node[circle,draw] (a3) at (-3,3) {} edge(a2);
            \node[circle,draw] (a4) at (-3,2.5) {} edge(a3);
            \node[circle,draw] (a5) at (-3,2) {} edge(a4);
            \node[circle,draw] (a6) at (-3,1.5) {} edge(a5);
            \node[circle,draw] (a7) at (-3,1) {} edge(a6);
            \node[circle,draw] (a8) at (-3,.5) {} edge(a7);
            
            \node[circle,draw] (b1) at (-2.5,4) {} edge(s1);
            \node[circle,draw] (b2) at (-2.5,3.5) {} edge(b1);
            \node[circle,draw] (b3) at (-2.5,3) {} edge(b2) edge(w3);
            \node[circle,draw] (b5) at (-2.5,2) {} edge(w3);
            \node[circle,draw] (b6) at (-2.5,1.5) {} edge(b5);
            \node[circle,draw] (b7) at (-2.5,1) {} edge(b6);
            \node[circle,draw] (b8) at (-2.5,.5) {} edge(b7);
            
            \node[circle,draw] (c1) at (-2,4) {} edge(s1);
            \node[circle,draw] (c2) at (-2,3.5) {} edge(c1);
            \node[circle,draw] (c3) at (-2,3) {} edge(c2);
            \node[circle,draw] (c4) at (-2,2.5) {} edge(c3);
            \node[circle,draw] (c5) at (-2,2) {} edge(c4);
            \node[circle,draw] (c6) at (-2,1.5) {} edge(c5) edge(v5);
            \node[circle,draw] (c8) at (-2,.5) {} edge(v5);
            
            \node[circle,draw,label=below:$t_1$] at (-2.5,0) {} edge(a8) edge(b8) edge(c8);

            \node[circle,draw,label=$s_2$] (s2) at (2.5,4.5) {};

            \node[circle,draw] (d1) at (2,4) {} edge(s2) edge(w2);
            \node[circle,draw] (d3) at (2,3) {} edge(w2);
            \node[circle,draw] (d4) at (2,2.5) {} edge(d3);
            \node[circle,draw] (d5) at (2,2) {} edge(d4);
            \node[circle,draw] (d6) at (2,1.5) {} edge(d5);
            \node[circle,draw] (d7) at (2,1) {} edge(d6);
            \node[circle,draw] (d8) at (2,.5) {} edge(d7);
            
            \node[circle,draw] (e1) at (2.5,4) {} edge(s2);
            \node[circle,draw] (e2) at (2.5,3.5) {} edge(e1);
            \node[circle,draw] (e3) at (2.5,3) {} edge(e2);
            \node[circle,draw] (e4) at (2.5,2.5) {} edge(e3) edge(v4);
            \node[circle,draw] (e6) at (2.5,1.5) {} edge(v4);
            \node[circle,draw] (e7) at (2.5,1) {} edge(e6);
            \node[circle,draw] (e8) at (2.5,.5) {} edge(e7);
            
            \node[circle,draw] (f1) at (3,4) {} edge(s2);
            \node[circle,draw] (f2) at (3,3.5) {} edge(f1);
            \node[circle,draw] (f3) at (3,3) {} edge(f2);
            \node[circle,draw] (f4) at (3,2.5) {} edge(f3);
            \node[circle,draw] (f5) at (3,2) {} edge(f4);
            \node[circle,draw] (f6) at (3,1.5) {} edge(f5);
            \node[circle,draw] (f7) at (3,1) {} edge(f6) edge(w6);
            
            \node[circle,draw,label=below:$t_2$] at (2.5,0) {} edge(d8) edge(e8) edge(w6);

        \end{tikzpicture}    \captionsetup{singlelinecheck=off}
        \caption{An example of the construction in the proof of \cref{prop:lnp} for the input formula \[\phi = (x_1 \lor x_2 \lor \overline{x_3}) \land (x_1 \lor \overline{x_2} \lor x_3).\]}
        \label{fig:layered}
    \end{figure}
    We set~$s_{m+1} = u_0$ and~$t_{m+1} = u_n$ and we will ensure that any shortest~$s_{m+1}$-$t_{m+1}$-path contains all vertices in~$U$ and for each~$i \in [n]$ either all vertices in~$V_i$ or all vertices in~$W_i$.
    These choices will correspond to setting the~$i$\textsuperscript{th}~variable to either true or false.
    Additionally, we have a terminal pair~$(s_j,t_j)$ for each clause~$C_j$.
    There are (up to) three disjoint paths between~$s_j$ and~$t_j$, each of which is of length~$n \cdot (m+1)$.
    These paths correspond to which literal in the clause satisfies it.
    For each of these paths, let~$x_i$ be the variable corresponding to the path.
    If~$x_i$ appears positively in~$C_j$, then we identify the~$(i-1)(m+1)+j+1$\textsuperscript{st} vertex in the path with~$w_j^i$ and if~$x_i$ appears negatively, then we identify the vertex with~$v_j^i$.
    Note that the constructed instance is~$(n\cdot (m+1))$-layered and that once any monotone path starting in~$s_{m+1}$ leaves the selection gadget, it cannot end in~$t_{m+1}$ as any vertex outside the selection gadget has degree at most two and at the end of these paths are only terminals~$t_1,t_2,\ldots,t_{m}$.

    Since the construction runs in polynomial time, we focus on the proof of correctness.
    If the input formula is satisfiable, then we connect all terminal pairs as follows.
    Let~$\beta$ be a satisfying assignment.
    The terminal pair~$(s_{m+1},t_{m+1})$ is connected by a path containing all vertices in~$U$ and for each~$i \in [n]$, if~$\beta$ assigns the~$i$\textsuperscript{th} variable to true, then the path contains all vertices in~$V_i$ and otherwise all vertices in~$W_i$.
    For each clause~$C_j$, let~$x_{i_j}$ be variable in~$C_j$ which~$\beta$ uses to satisfy~$C_j$ (if multiple such variables exist, we choose an arbitrary one).
    By construction, there is a path associated with~$x_{i_j}$ that connects~$s_j$ and~$t_j$ and only uses one vertex in~$W_i$ if~$x_{i_j}$ appears positively in~$C_j$ and a vertex in~$V_i$, otherwise.
    Since each vertex in~$V_i$ and~$W_i$ is only associated with at most one such path, we can connect all terminal pairs.
    For the other direction assume that all~$m+1$ terminal pairs can be connected by disjoint shortest paths.
    As argued above, the~$s_{m+1}$-$t_{m+1}$-path stays in the selection gadget.
    We define a truth assignment by assigning the~$i$\textsuperscript{th} variable to true if and only if the~$s_{m+1}$-$t_{m+1}$-path contains the vertices in~$V_i$.
    For each clause~$C_j$, we look at the neighbor of~$s_j$ in the solution.
    This vertex belongs to a path of degree-two vertices that at some point joins the selection gadget.
    By construction, the vertex where this happens is not used by the~$s_{m+1}$-$t_{m+1}$-path, which guarantees that~$C_j$ is satisfied by the corresponding variable.
    Since all clauses are satisfied by the same assignment, the formula is satisfiable and this concludes the proof.
\end{proof}
}

With the NP-hardness of \ldsp{} at hand, we can now show that it does not admit a polynomial kernel when parameterized by~$\ell$ by providing an OR-cross-composition from its unparameterized version to the version parameterized by~$\ell$.

\begin{theorem}
    \label{thm:nopkl}
    \ldsp{} parameterized by $\ell = k\cdot (\lambda-1)$ does not admit a polynomial kernel unless \npconp.
\end{theorem}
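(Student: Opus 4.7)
The plan is to rule out a polynomial kernel via an OR-cross-composition from \ldsp{} itself (NP-hard by \cref{prop:lnp}) into \ldsp{} parameterized by $\ell=k(\lambda-1)$. As the polynomial equivalence relation, I would call two \ldsp{} instances equivalent iff they have the same number of terminal pairs $k$ and the same number of layers $\lambda$; any finite set of instances is then partitioned into at most $\max_I|I|^2$ classes. Given $q$ equivalent input instances $I_1,\ldots,I_q$, I may by standard padding assume that they share the same layer sizes and that $q=2^d$ with $d=\lceil\log q\rceil$. It remains to build an \ldsp{} instance $I^*$ that is a yes-instance iff at least one $I_j$ is, with parameter $\ell^*$ polynomial in $\max_i|I_i|+\log q$.

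The output $I^*$ will retain $k^*=k$ terminal pairs and have $\lambda^* = \lambda + O(k^2 d)$ layers, split into a \emph{selection section} followed by a \emph{main section}. In the main section, the $q$ input graphs sit side by side with no interconnecting edges, the copies $t_i^{(j)}$ are identified into a single sink $t_i^*$ for each $i$, and each $s_i^{(j)}$ is reachable from the selection section only via the exit vertex encoding $j$. In the selection section, each pair $i$ admits $q$ shortest-path routes from $s_i^*$ to these entries, one per instance choice $j\in[q]$; each route carries the $d$-bit binary representation of $j$ as its signature, encoded by which vertex the path visits in each of $d$ dedicated bit-layers. The selection section further contains $\binom{k}{2}\cdot d$ \emph{bit-consistency gadget} layers, one per pair of terminal indices $(i_1,i_2)$ and bit position $b\in[d]$, designed to force the $b$-th bit of $P_{i_1}$'s route to match the $b$-th bit of $P_{i_2}$'s route. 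Taken together, these gadgets pin down a single common $j^*\in[q]$ in any solution.

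A bit-consistency gadget occupies one layer containing two ``crossing'' vertices $a$ and $b$ used exclusively by $P_{i_1}$ and $P_{i_2}$, plus $k-2$ private bypass vertices for the remaining pairs. The wiring sends all $i_1$-routes with bit $0$ into $a$ and those with bit $1$ into $b$, while for $i_2$ the roles of $a$ and $b$ are swapped; vertex-disjointness then forces $P_{i_1}$ and $P_{i_2}$ to occupy distinct vertices in $\{a,b\}$, which is possible iff both routes agree on bit $b$. Correctness will then be straightforward: a solution of $I_{j^*}$ lifts to $I^*$ by routing every pair along its $j^*$-route and then applying the $I_{j^*}$-solution in the main section, and conversely any solution of $I^*$ pins down a common $j^*$ via the gadgets and thus produces a solution of $I_{j^*}$. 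The parameter bound $\ell^*=k^*(\lambda^*-1)=O(k\lambda+k^3 d)$ is polynomial in $\max_i|I_i|+\log q$, so the OR-cross-composition completes. The main obstacle I anticipate is the length- and shortest-path accounting: the graph must be padded so that every route of every pair has the same length regardless of its bit pattern, and wired so that no shortest $s_i^*$-$t_i^*$ path can bypass a gadget or mix vertices from two different main-section instances.
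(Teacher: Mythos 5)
Your high-level plan --- an OR-cross-composition from \ldsp{} (NP-hard by \cref{prop:lnp}) into \ldsp{} parameterized by $\ell$, with the equivalence relation grouping instances by $(k,\lambda)$ and padding to $q=2^d$ --- coincides with the paper's. The concrete composition differs substantially: the paper merges instances two at a time, $\log q$ rounds in total, each round adding $2k$ layers via a pair of selector gadgets (one above the subinstances, one below) whose wiring forces all $k$ paths to commit to the same one of the two subinstances; you instead place all $q$ instances side by side and build a one-shot selection section with $d$ bit-declaration layers plus $\binom{k}{2}d$ pairwise bit-consistency gadget layers.

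There is a genuine gap in your bit-consistency design as written. Each gadget is a single layer with two crossing vertices (for $P_{i_1}$, $P_{i_2}$) and one private bypass vertex per remaining index $i$. But $P_i$ must traverse the $\Theta(k^2 d)$ gadget layers in which it is a bystander, and in each of those it passes through a lone bypass vertex that encodes none of $P_i$'s $d$ bits. Consequently, the $d$-bit signature of $P_i$'s route is not preserved across a bypass layer: after a bypass, $P_i$ may continue toward any entry $s_i^{(j)}$, including one whose bits contradict what was declared (or checked) earlier, so the gadgets do not actually pin down a common $j^*$. Fixing this requires each bystander to carry its label through the layer --- roughly $q$ vertices per index per layer, or a staged organization that declares, checks, and consumes bits one at a time --- and that is precisely the kind of delicate wiring you flag at the end as an ``anticipated obstacle'' without resolving. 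The paper's recursive merging sidesteps this entirely: each merge settles exactly one binary choice with a self-contained pigeonhole argument (if $P_1$ commits to $G_i$, it blocks the exits from $G_j$, forcing all paths into $G_i$), so there is never a multi-bit label that must survive intervening layers.
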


\begin{proof}
    We present an OR-cross-composition from \ldsp{} into \ldsp{} parameterized by~$\ell$.
    To this end, assume we are given~$t$ instances of \ldsp{} all of which have the same number~$\lambda$ of layers and the same number~$k$ of terminal pairs.
    Moreover, we assume that~$t$ is some power of two.
    Note that we can pad the instance with at most~$t$ trivial no-instances to reach an equivalent instance in which the number of instances is a power of two and the size of all instances combined has at most doubled.

    The main ingredient for our proof is a construction to merge two instances into one.
    The construction is depicted in \cref{fig:nopolykernel}.
    We first prove that the constructed instance is a yes-instance if and only if at least one of the original instances is a yes-instance.
    Afterwards, we will show how to use this construction to get an OR-cross-composition for all~$t$ instances.
    \begin{figure}[t]
        \centering
        \begin{tikzpicture}[yscale=.5]
            \node[circle,draw,label=$s_1$] at(3.5,9) (v11) {};
            \node[circle,draw,label=$s_2$] at(4.5,9) (v12) {};
            \node[circle,draw,label=$s_3$] at(5.5,9) (v13) {};
            \node at(7,9) {$\dots$};
            \node[circle,draw,label=$s_k$] at(8.5,9) (v15) {};

            \node[circle,draw] at(3,8) (v21) {} edge(v11);
            \node[circle,draw] at(4,8) (v22) {} edge(v12) edge(v11);
            \node[circle,draw] at(5,8) (v23) {} edge(v13) edge(v12);
            \node[circle,draw] at(7,8) (v25) {} edge[dotted](v15) edge[dotted](v13);
            \node[circle,draw] at(9,8) (v26) {} edge(v15);
            
            \node[circle,draw] at(2.5,7) (v31) {} edge(v21);
            \node[circle,draw] at(3.5,7) (v32) {} edge(v22);
            \node[circle,draw] at(4.5,7) (v33) {} edge(v23) edge(v22);
            \node[circle,draw] at(6.5,7) (v35) {} edge[dotted](v23) edge(v25);
            \node[circle,draw] at(7.5,7) (v36) {} edge(v25);
            \node[circle,draw] at(9.5,7) (v37) {} edge(v26);
            
            \node[circle,draw] at(2,6) (v41) {} edge[dotted](v31);
            \node[circle,draw] at(3,6) (v42) {} edge[dotted](v32);
            \node[circle,draw] at(4,6) (v43) {} edge[dotted](v33);
            \node[circle,draw] at(6,6) (v45) {} edge(v35) edge[dotted] (v33);
            \node[circle,draw] at(7,6) (v46) {} edge(v35);
            \node[circle,draw] at(8,6) (v47) {} edge(v36);
            \node[circle,draw] at(10,6) (v48) {} edge[dotted](v37);
        
            \node[circle,draw,label=left:$s_1^i$] at(1,5) (a) {} edge(v41);
            \node[circle,draw,label=left:$s_2^i$] at(2,5) (b) {} edge(v42);
            \node[circle,draw,label=left:$s_3^i$] at(3,5) (c) {} edge(v43);
            \node[] at(3.8,5) {$\dots$};
            \node[circle,draw,label=left:$s_k^i$] at(5,5) (d) {} edge(v45);
            \node at(0,3) {$G_i$};
            \node[circle,draw,label=left:$t_1^i$] at(5,1) (e) {};
            \node[circle,draw,label=left:$t_2^i$] at(4,1) (f) {};
            \node[circle,draw,label=left:$t_3^i$] at(3,1) (g) {};
            \node[] at(1.8,1) {$\dots$};
            \node[circle,draw,label=left:$t_k^i$] at(1,1) (h) {};

            \node[circle,draw,label=left:$s_1^j$] at(7,5) (A) {} edge(v45);
            \node[circle,draw,label=left:$s_2^j$] at(8,5) (B) {} edge(v46);
            \node[circle,draw,label=left:$s_3^j$] at(9,5) (C) {} edge(v47);
            \node[] at(9.8,5) {$\dots$};
            \node[circle,draw,label=left:$s_k^j$] at(11,5) (D) {} edge(v48);
            \node at(11.6,3) {$G_j$};
            \node[circle,draw,label=left:$t_1^j$] at(11,1) (E) {};
            \node[circle,draw,label=left:$t_2^j$] at(10,1) (F) {};
            \node[circle,draw,label=left:$t_3^j$] at(9,1) (G) {};
            \node[] at(7.8,1) {$\dots$};
            \node[circle,draw,label=left:$t_k^j$] at(7,1) (H) {};
            
            \node[rectangle,rounded corners,dotted,draw, minimum width=5cm, minimum height=2.5cm] at (2.8,3) {};
            \node[rectangle,rounded corners,dotted,draw, minimum width=5cm, minimum height=2.5cm] at (8.8,3) {};
            
            \node[circle,draw] at(2,0) (v51) {} edge(h);
            \node[circle,draw] at(4,0) (v52) {} edge(g);
            \node[circle,draw] at(5,0) (v53) {} edge(f);
            \node[circle,draw] at(6,0) (v55) {} edge(e) edge(H);
            \node[circle,draw] at(8,0) (v56) {} edge(G);
            \node[circle,draw] at(9,0) (v57) {} edge(F);
            \node[circle,draw] at(10,0) (v58) {} edge(E);
            
            \node[circle,draw] at(2.5,-1) (v61) {} edge[dotted](v51);
            \node[circle,draw] at(4.5,-1) (v62) {} edge(v52);
            \node[circle,draw] at(5.5,-1) (v63) {} edge(v53) edge(v55);
            \node[circle,draw] at(7.5,-1) (v65) {} edge[dotted](v56) edge[dotted](v55);
            \node[circle,draw] at(8.5,-1) (v66) {} edge[dotted](v57);
            \node[circle,draw] at(9.5,-1) (v67) {} edge[dotted](v58);
            
            \node[circle,draw] at(3,-2) (v71) {} edge(v61);
            \node[circle,draw] at(5,-2) (v72) {} edge(v62) edge(v63);
            \node[circle,draw] at(7,-2) (v73) {} edge[dotted](v63) edge(v65);
            \node[circle,draw] at(8,-2) (v75) {} edge(v65) edge(v66);
            \node[circle,draw] at(9,-2) (v76) {} edge(v67);
            
            \node[circle,draw,label=$t_1$] at(8.5,-3) (v81) {} edge(v75) edge(v76);
            \node[circle,draw,label=$t_2$] at(7.5,-3) (v82) {} edge(v75) edge(v73);
            \node[circle,draw,label=$t_3$] at(6.5,-3) (v83) {} edge(v73) edge[dotted](v72);
            \node at(5,-3) {$\dots$};
            \node[circle,draw,label=$t_k$] at(3.5,-3) (v85) {} edge(v71) edge[dotted](v72);
        \end{tikzpicture}
        \caption{The construction to merge two instances of \ldsp{} into one equivalent instance. The dotted edges can be read as regular edges for~$k=4$ and indicate where additional vertices and edges have to be added for more terminal pairs. Note that the height of a vertex in the drawing does not indicate its layer as dotted edges distort the picture.}
        \label{fig:nopolykernel}
    \end{figure}

    To show that the construction works correctly, first assume that one of the two original instances is a yes-instance.
    Since both cases are completely symmetrical, assume that there are shortest disjoint paths between all terminal pairs~$(s_a^i,t_a^i)$ for all~$a \in [k]$ in~$G_i$.
    Then, we can connect all terminal pairs~$(s_b,t_b)$ by using the unique shortest paths between~$s_b$ and~$s_b^i$ and between~$t_b^i$ and~$t_b$ for all~$b \in [k]$ together with the solution paths inside~$G_i$.
    Now assume that there is a solution in the constructed instance, that is, there are pairwise vertex-disjoint shortest paths between all terminal pairs~$(s_b,t_b)$ for all~$b \in [k]$.
    First assume that the~$s_1$-$t_1$-path passes through~$G_i$.
    Then, this path uses the unique shortest path from~$t_1^i$ to~$t_i$.
    Note that this path blocks all paths between~$t_b^j$ and vertices in~$G_j$ for all~$b \neq 1$.
    Thus, all paths have to pass through the graph~$G_i$.
    Note that the only possible way to route vertex-disjoint paths from all~$s$-terminals to all~$s^i$ terminals and from all~$t^i$-terminals to all~$t$-terminals is to connect~$s_a$ to~$s_a^i$ and~$t_a^i$ to~$t_a$ for all~$a \in [k]$.
    This implies that there is a solution that contains vertex-disjoint shortest paths between~$s_a^i$ and~$t_a^i$ in~$G_i$ for all~$a \in [k]$, that is, at least one of the two original instances is a yes-instance.
    The case where the~$s_1$-$t_1$-path passes through~$G_j$ is analogous since the only monotone path from~$s_1$ to a vertex in~$G_j$ is the unique shortest~$s_1$-$s_1^j$-path and this path blocks all monotone paths from~$s_a$ to vertices in~$G_i$ for all~$a \neq 1$.
    
    Note that the constructed graph is layered and that the number of layers is~$\lambda + 2k$.
    Moreover, the size of the new instance is in~$O(|G_i| + |G_j| + k^2)$.
    To complete the reduction, we iteratively half the number of instances by partitioning all instances into arbitrary pairs and merge the two instances in a pair into one instance.
    After~$\log t$ iterations, we are left with a single instance which is a yes-instance if and only if at least one of the~$t$ original instances is a yes-instance.
    The size of the instance is in~$O(\sum_{i \in [t]} |G_i| + t \cdot k^2)$ which is clearly polynomial in~$\sum_{i \in [t]} |G_i|$ as each instance contains at least~$k$ vertices.
    Moreover, the parameter~$\ell$ in the constructed instance is~$k \cdot (\lambda-1) + 2k\log t$, which is polynomial in~$|G_i|+\log t$ for each graph~$G_i$ as~$G_i$ contains at least one vertex in each of the~$\lambda$ layers and at least~$k$ terminal vertices.
    Thus, all requirements of an OR-cross-composition are met and this concludes the proof.
\end{proof}

Note that since \ldsp{} is a special case of \dsp{} (and therefore of \mdsp), \cref{thm:nopkl} also excludes polynomial kernels for these problems parameterized by~$\ell$.

\begin{corollary}
    \dsp{} and \mdsp{} parameterized by $\ell$ do not admit polynomial kernels unless \npconp.
\end{corollary}

\section{Conclusion}
\label{sec:concl}
In this paper, we studied \mdsp.
We show that there is no~$m^{\nicefrac{1}{2}-\varepsilon}$-approximation in polynomial time unless~P $=$ NP.
Moreover, if FPT~$\neq$~W[1] or assuming the stronger gap-ETH, we show that there are no non-trivial approximations for \mdsp{} in $f(k)\poly(n)$ time.
When parameterized by~$\ell$, there is a simple~$\lceil\sqrt{\ell}\rceil$-approximation in polynomial time that matches the $m^{\nicefrac{1}{2}-\varepsilon}$ lower bound as~$\ell \leq \min(n,m)$.
Finally, we showed that \mdsp{} can be solved in~$2^{O(k)}\poly(n)$ time, but it does not admit a polynomial kernel and assuming the ETH, it cannot be solved in~$2^{o(\ell)}\poly(n)$ time.
Also assuming the ETH, it cannot be solved in~$2^{o(\sqrt{n})}$ time on planar graphs.

A way to combine approximation algorithms and the theory of (polynomial) kernels are \emph{lossy kernels}~\cite{LPRS17}.
Since the exact definition is quite technical and not relevant for this work, we only give an intuitive description.
An~$\alpha$-approximate kernel or lossy kernel for an optimization problem is a pair of algorithms that run in polynomial time which are called \emph{pre-processing algorithm} and \emph{solution-lifting algorithm}.
The pre-processing algorithm takes as input an instance~$(I,\rho)$ of a parameterized problem~$P$ and outputs an instance~$(I',\rho')$ of~$P$ such that~$|I'|+\rho' \leq g(\rho)$ for some computable function~$g$.
The solution-lifting algorithm takes any solution~$S$ of~$(I',\rho')$ and transforms it into a solution~$S^*$ of~$(I,\rho)$ such that if~$S$ is a~$\gamma$-approximation for~$(I',\rho')$ for some~$\gamma \geq 1$, then~$S^*$ is an~$\gamma\cdot\alpha$-approximation for~$(I,\rho)$.
If the size of the kernel is~$g(\rho)$ and if~$g$ is constant or a polynomial, then we call it a constant-size or polynomial-size~$\alpha$-approximate kernel, respectively.
It is known that a (decidable) parameterized problem admits a constant-size approximate~$\alpha$-kernel if and only if the unparameterized problem associated with~$P$ can be~$\alpha$-approximated (in polynomial time)~\cite{LPRS17}.
Moreover, any (decidable) parameterized problem admits an~$\alpha$-approximate kernel (of arbitrary size) if and only if the problem can be~$\alpha$-approximated in~$f(\rho)\poly(|I|)$~time.

In terms of lossy kernelization, our results imply that there are no non-trivial lossy kernels for the parameter~$k$.
For the parameter~$\ell$, \cref{prop:lapprox} implies a constant-size lossy kernel for~$\alpha \in \Omega(\sqrt{\ell})$ and \cref{prop:fptl} implies an~$f(\ell)$-size lossy kernels for any~$\alpha \geq 1$. 
This leaves the following gap which we pose as an open problems.

\begin{open problem}
    Are there any $\poly(\ell)$-size lossy kernels for \mdsp{} with~$\alpha \in o(\sqrt{\ell})$ (or even constant $\alpha$)?
\end{open problem}

\ifjourn
\section*{Declarations}

\subparagraph*{Funding.}
This work has received funding from the Research Council of Norway via the project BWCA (grant no. 314528) and the European Research Council (ERC)
under the European Union’s Horizon 2020 research and innovation programme (grant agreement No.
819416).

\subparagraph*{Competing Interests.}
The authors declare no competing interests.

\else
    \bibliographystyle{plainnat}
\fi
\bibliography{ref}

\end{document}